\documentclass[]{infiniai}
\usepackage{microtype}
\usepackage{amsfonts}
\usepackage{graphicx}
\usepackage{booktabs}
\usepackage{wrapfig}
\usepackage{amsmath}
\usepackage{amsthm}
\usepackage{subcaption}

\usepackage{algpseudocode}
\usepackage[linesnumbered,lined,boxed,commentsnumbered,ruled,longend]{algorithm2e}
\usepackage[capitalize,noabbrev]{cleveref}
\theoremstyle{plain}
\newtheorem{theorem}{Theorem}[section]
\newtheorem{proposition}[theorem]{Proposition}
\newtheorem{lemma}[theorem]{Lemma}

\theoremstyle{definition}

\newtheorem{assumption}[theorem]{Assumption}
\theoremstyle{remark}

\usepackage{enumitem}

\title{WWW.Serve: Interconnecting Global LLM Services through Decentralization}

\author{Huanyu Wang, Ziyu Xia, Zhuoming Chen, Beidi Chen}
\contact{\{huanyuw, ziyuxia, zhuominc, beidic\}@andrew.cmu.edu}
\affiliation{Carnegie Mellon University}

\abstract{Large language model (LLM) services are mostly centralized, causing inherent scalability bottlenecks and leaving substantial scattered GPU resources underutilized. While decentralized serving could potentially address these limitations, it imposes challenges of \textit{trust}, as the identity and behavior of participants cannot be reliably regulated, and \textit{fairness}, i.e., how to maximize the benefits of all resource providers to improve engagement. However, existing decentralized frameworks predominantly emphasize the rights and protections of users and the cooperative aspect among GPU providers, while \textit{overlooking the inherent competitive dynamics}, imposing substantial constraints on GPU providers, such as requiring them to accept excessive platform-level oversight and to execute all assigned requests with fixed software stacks on fixed hardware configurations. We argue that such assumptions are unrealistic in real-world decentralized environments. To this end, we propose \textbf{WWW.Serve}, a decentralized framework for interconnecting LLM services worldwide. It preserves the flexibility of service providers, allowing them to decide \textit{when, under what policies, and with what resources} they join the decentralized network, while further ensuring their anonymity. In terms of efficiency, WWW.Serve supports self-organizing request dispatch, enabling the network to autonomously allocate requests without centralized coordination. Three key designs are integrated: a blockchain-inspired credit system for trustless collaboration, gossip-driven peer synchronization for flexible participation, and a duel-and-judge mechanism for robust contributor evaluation. Empirically, we show that WWW.Serve incentivizes higher-quality services to obtain greater profit, while improving global SLO (service-level-objective) attainment by up to $\mathbf{1.5\times}$ and lowering latency by $\mathbf{27.6\%}$. Its performance approaches, and in some cases surpasses, centralized scheduling, while fully preserving the benefits of decentralization. These results highlight WWW.Serve as a promising foundation for real-world, decentralized LLM serving.
}
\metadata[Github]{\url{https://github.com/Infini-AI-Lab/WWW.Serve}}
\metadata[Website]{\url{https://infini-ai-lab.github.io/WWW.Serve}}

\begin{document}

\maketitle

\begin{figure}[h]
    \centering
    \begin{subfigure}{0.33\linewidth}
        \centering
        \includegraphics[width=\linewidth]{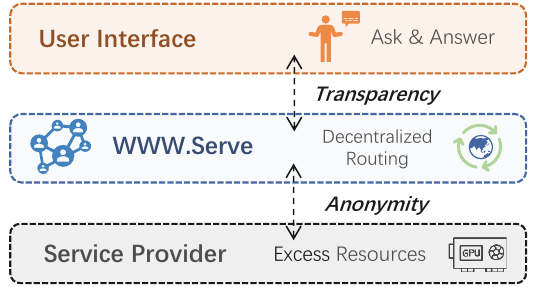}
        \caption{System-level overview.}
        \label{fig:network}
    \end{subfigure}
    \begin{subfigure}{0.66\linewidth}
        \centering
        \includegraphics[width=\linewidth]{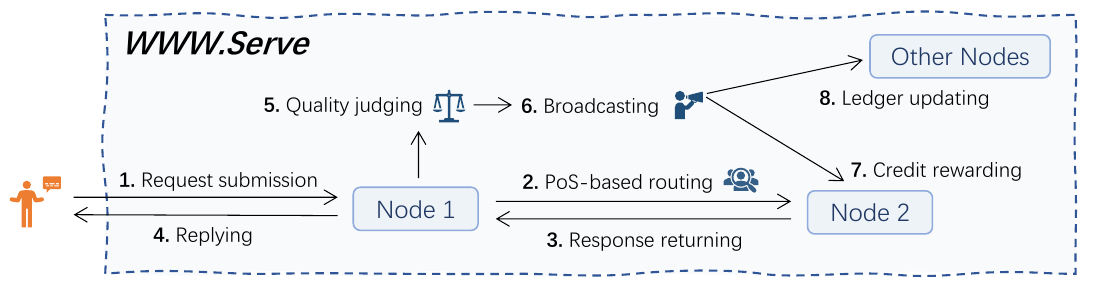}
        \caption{Collaborative request serving workflow.}
        \label{fig:workflow}
    \end{subfigure}
    \caption{WWW.Serve operates as an intermediate decentralized serving layer between users and LLM service providers, offering users access to an open and competitive market of worldwide LLM services while preserving service providers’ anonymity and flexibility. Within WWW.Serve, inference requests follow a collaborative workflow that performs decentralized routing, execution, and quality-aware evaluation.}
\end{figure}

\section{Introduction}

Large language models (LLMs) are becoming popular. With the increasing deployment of LLM services and prices of GPUs, distributed LLM serving has become essential for mitigating workload fluctuations and leveraging potentially idle hardware resources. Centralized scheduling~\citep{sglang,vllm}, however, constrains the engagement of different entities. Therefore, decentralization has long been recognized as an effective paradigm~\citep{survey_dec_learning,survey_dec_training}. By relying on peer-to-peer communication~\citep{think_p2p}, it improves scalability, adapts to dynamic participation, enhances robustness by eliminating single points of failure, and preserves anonymity and privacy~\citep{dec_privacy_preserving,byzantine}.

Despite these apparent advantages, existing decentralized serving systems remain largely impractical in real-world settings: (1) Fundamentally, they predominantly emphasize the rights and protections of users and the cooperative aspect among GPU providers while \textbf{overlooking the inherent competitive dynamics}, namely, that GPU providers, as the holders of the actual computational assets, are naturally incentivized to maximize their own profit. Existing frameworks~\citep{gentorrent} attempt to rely on a small central organization to impose substantial constraints on GPU providers, such as requiring them to accept excessive platform-level oversight~\citep{gentorrent,deserve} and to execute all assigned requests with fixed software stacks~\citep{eccos,petals,helix} on fixed hardware configurations. Although this may theoretically enable better resource allocation, the regulator itself is untrusted, rendering the approach unrealistic in practice. (2) Besides, providers typically maintain their own prioritized workloads and may experience fluctuations in available resources. This highlights the need for flexible, customizable mechanisms that allow providers to determine how they engage with the decentralized system.

Ideally, we desire a decentralized framework that acts like an open, competitive market, allowing providers to decide \textbf{when, under what policies, and with what resources} they join the decentralized network. At the same time, such a framework should: \textbf{1.} provide a well-designed reward mechanism that incentivizes providers to deliver higher-quality services, including faster hardware, more user-oriented scheduling policies, better serving systems, and higher-quality models. Such incentives should further encourage innovation (e.g., in models, systems, or kernels), enabling providers to offer superior services at lower cost. \textbf{2.} enable market-driven exchange of computational capacity, where overloaded nodes can outsource requests while underutilized nodes capitalize on idle resources, allowing compute supply and demand to self-balance through decentralized interactions. \textbf{3.} incorporate a principled routing protocol to improve global efficiency under highly dynamic and unpredictable resource availability. However, to meet these demands, three fundamental questions arise. In the following, we discuss these challenges and outline our key approaches to address them.

\textbf{Question 1.} \textit{How can the system enable trustworthy market-driven trade of computational capacity, i.e., implement reliable request scheduling among anonymous participants without central coordinators?} Achieving this requires a way to quantify each participant’s contributed capacity and use it to guide task allocation. To this end, we introduce a credit-based transaction system that functions as a reputation-like indicator under anonymity: participants earn credits by serving delegated requests and spend them when offloading their own tasks. Request routing is then guided via a Proof-of-Stake-based (PoS) mechanism, in which participants’ staked credits, freely adjust according to their own strategy, determine their likelihood of being selected to execute delegated requests. This design allows high-load servers to offload tasks to relieve pressure and improve user satisfaction, while low-load servers utilize idle resources to earn credits for future offloading. By accumulating credits through contributions, participants effectively engage in a decentralized market for computing power.

\textbf{Question 2.} \textit{How can we incentivize participants to provide high-quality services, thereby improving overall user experience?} In an anonymous network, providers naturally seek to maximize their own gain. These competitive dynamics, however, create the risk that participants may deploy low-quality services to “exploit” the contributions of others, undermining overall system performance. To address this, we must align individual incentives with service quality. To this end, we introduce a duel-and-judge mechanism: a subset of requests is collectively evaluated within the network through pairwise comparison, with the superior response receiving a credit reward and the inferior response incurring a penalty. This design enables dynamic credit redistribution based on service quality. When combined with PoS-based request scheduling, it can be proved that low-quality nodes are gradually phased out of active participation, reinforcing the network’s overall service quality and fostering decentralized incentives for correctness.

\textbf{Question 3.} \textit{How can the system remain robust under highly dynamic and unpredictable resource availability?} In real-world scenarios, individual infrastructures may suffer from hardware failures, network disconnections, or user-driven constraints, all of which lead to unstable participation of resources. To address this challenge, we design a lightweight gossip-driven protocol that enables dynamic online and offline participation. Each participant periodically exchanges availability information with a subset of peers and reconciles discrepancies. Through this protocol, newly joined resources can be quickly integrated into the network, while sudden departures or failures can be rapidly detected. Without relying on central coordinators, lightweight pairwise exchanges allow information updates to diffuse across the network and converge quickly, ensuring stable and reliable service despite the volatility of global-scale resources.

Having addressed these challenges, we introduce \textbf{WWW.Serve}, a decentralized framework for interconnecting global LLM services. In general, our main contributions are:
\begin{itemize}
[itemsep=0.0pt,topsep=0pt,leftmargin=*]
    \item We present \textbf{WWW.Serve}, a fully decentralized system that operates as an open, competitive market of worldwide LLM services, enabling request routing and workload balancing among anonymous LLM servers.
    \item We design three core mechanisms to ensure reliability: a credit-based transaction system for trustless request delegation, a gossip-driven protocol for dynamic peer synchronization, and a duel-and-judge mechanism for contributor evaluation.
    \item We provide a game-theoretic analysis proving that our collaborative framework converges to equilibria that sustain high-quality LLM service even under full anonymity.
    \item Empirical results demonstrate that WWW.Serve achieves near-centralized efficiency, improving global SLO attainment by up to $\mathbf{1.5\times}$ and reducing latency by up to $\mathbf{27.6\%}$, while sustaining robustness under dynamic participation and supporting flexible collaboration policies.
\end{itemize}

The rest of this paper is organized as follows. Section~\ref{sec:related} reviews related work. Sections~\ref{sec:overview} and~\ref{sec:core_design} present the architecture of WWW.Serve and its core mechanisms. Section~\ref{sec:theory} provides a game-theoretic analysis. Section~\ref{sec:application} evaluates the system empirically, with ablation studies presented in Section~\ref{sec:ablation}.

\section{Related Work}
\label{sec:related}

\textbf{Decentralized Computing.} Early volunteer-based platforms~\citep{seti,grid,boinc,foldingathome} demonstrate the feasibility of harnessing distributed resources for large-scale scientific workloads. With the advent of blockchain~\citep{bitcoin}, decentralized frameworks like Ethereum~\citep{ethereum} introduce trustless execution environments where tasks are handled transparently and verifiably through smart contracts. Subsequent systems such as Filecoin~\citep{filecoin} and Golem~\citep{golem} extend this model with incentive mechanisms such as Proof-of-Stake~\citep{ouroboros,casper}, ensuring fair contribution and deterring malicious behavior. These systems highlight the importance of incentive alignment and trustless coordination, motivating our decentralized LLM serving design.

\textbf{Large Language Model Serving.} LLMs demand substantial computational resources, thus are primarily deployed by service providers such as OpenAI~\citep{openai}, Anthropic~\citep{anthropic}, and Microsoft Azure~\citep{azure}, offering users online inference services. Meanwhile, the rapid rise of open-sourced, especially reasoning-oriented models such as DeepSeek-R1~\citep{deepseekr1}, LLaMA~3.1~\citep{llama3}, and Qwen3~\citep{qwen3} series, enables broader community access and deployment, therefore creating massive demand for high-throughput inference services. In response, a spectrum of LLM serving systems has been proposed.

At the single-model level, SGLang~\citep{sglang} and vLLM~\citep{vllm} leverage various advanced techniques to improve request concurrency and maximize inference efficiency. HexGen~\citep{hexgen} and Helix~\citep{helix} provide adaptive scheduling strategies that optimize model deployment and task migration across heterogeneous resources. Furthermore, DistServe~\citep{distserve} partitions prefill and decoding computations across multiple GPUs, while speculative decoding~\citep{speculative1,speculative2,specinfer} and sequence-length-aware scheduling~\citep{proxy_ssjf} offer complementary performance gains. However, these approaches remain inherently centralized and emphasize intra-model performance, without offering systematic solutions for workload balancing across multiple LLM servers.

\textbf{Decentralized LLM Serving.} Recently, decentralized approaches have been further explored, yet they fall short of fully realizing our desired goals. Petals~\citep{petals} supports collaborative deployment of a fixed LLM across volunteer GPUs, limiting flexibility in multi-model scenarios, and cannot adapt to dynamically changing resources. DeServe~\citep{deserve} offers a privacy-preserving offline serving system where users contribute inference capacity collectively, yet still depends on partial centralization for request dispatching and lacks mechanisms to ensure service quality. GenTorrent~\citep{gentorrent} distributes and executes model shards, but relies on trusted organizations to prevent malicious behavior, and therefore does not achieve full decentralization. Other works~\citep{linguachain,llm_blockchain,flock,fedshield} explore secure decentralized training and inference frameworks that integrate cryptographic and blockchain-based mechanisms. While relevant as background, these approaches do not directly address the specific challenges we target.

\section{WWW.Serve's Overview}
\label{sec:overview}

We begin by presenting the overall network architecture of WWW.Serve (Subsection~\ref{subsec:network_architecture}), followed by a description of the request routing process and node design (Subsection~\ref{subsec:request_routing}).

\begin{figure}[t]
    \centering
    \includegraphics[width=0.9\linewidth]{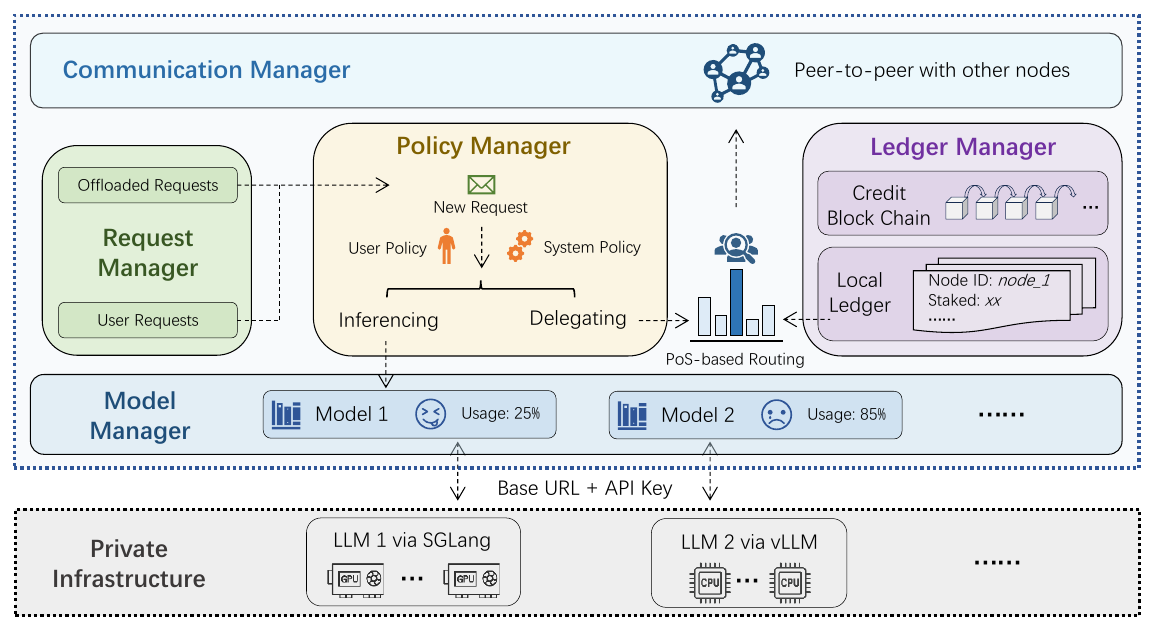}
    \caption{Internal architecture of a single node. Each node within WWW.Serve is organized around five core managers: \emph{Request}, \emph{Policy}, \emph{Ledger}, \emph{Model}, and \emph{Communication}, which together enable PoS-based request routing, policy-driven delegation, and efficient execution over heterogeneous LLM backends.}
    \label{fig:node_design}
\end{figure}

\subsection{Network Architecture}
\label{subsec:network_architecture}

As illustrated in Figure~\ref{fig:network}, WWW.Serve establishes a fully decentralized peer-to-peer network connecting users with LLM service providers.

From the user’s perspective, WWW.Serve provides a seamless serving interface. Users do not need to be aware of the underlying decentralized infrastructure; instead, they simply submit inference requests and wait for responses, just as they would with conventional LLM online services. The framework automatically handles request routing, resource discovery, and response evaluation. This design greatly lowers the barrier to adoption, allowing users to access global LLM services without requiring specialized knowledge of network topology or coordination protocols.

From the service provider’s perspective, WWW.Serve offers a simple yet flexible participation model. Providers can contribute surplus computational resources without exposing sensitive information, while retaining full control and anonymity within the ecosystem. They are free to join or leave at any time, enabling adaptive scheduling and resource allocation. This design encourages broader participation for service providers, converting idle capacity into valuable contributions for LLM serving.

\subsection{Request Routing and Node Design}
\label{subsec:request_routing}

As illustrated in Figure~\ref{fig:workflow} and~\ref{fig:node_design}, WWW.Serve structures inference serving around a decentralized request routing workflow and a modular node architecture.
The routing workflow consists of four key stages:

\textbf{Request admission.} When a user submits an inference request, it first enters the local request queue maintained by the \emph{Request Manager}, which handles both user-originated and delegated requests. This ensures orderly processing while decoupling admission from execution.

\textbf{Scheduling and policy enforcement.} The queued request is then subject to the service provider’s configurable policies. The \emph{Policy Manager} decides whether to execute the request locally or delegate it to other nodes, considering factors such as workload thresholds, willingness to delegate requests, and customized load-balancing rules. This design allows service providers to flexibly participate in collaborative serving while retaining full control over their resources.

\textbf{Executor selection and trust establishment.} If the request is delegated, the node selects a reliable executor. To this end, the \emph{Ledger Manager} provides access to peers’ stake balances. Candidates are sampled via a Proof-of-Stake-based mechanism, where the probability of selection is proportional to their staked credit. Each candidate is further probed to verify its willingness according to its own policy. Once accepted, the request is forwarded, executed locally by the chosen peer, and the response is returned to the originator. The executor is rewarded through a ``credits-for-offloading'' transaction, while the duel-and-judge mechanism further evaluates response quality (details in Subsection~\ref{subsec:credit_system} and Subsection~\ref{subsec:duel_judge}).

\textbf{Backend-agnostic execution.} For locally served requests, the \emph{Model Manager} provides a unified abstraction layer over diverse serving backends. It executes inference, monitors utilization, and preserves intra-model scheduling efficiency. This ensures that heterogeneous resources can be seamlessly integrated into WWW.Serve.

Together, these stages form a request routing pipeline that ensures policy-driven scheduling, trust-aware executor selection, and efficient execution on heterogeneous LLM servers.

\section{Core Mechanisms}
\label{sec:core_design}

In this section, we introduce three core designs of WWW.Serve: (i) the \emph{Credit-based Transaction System} (Subsection~\ref{subsec:credit_system}), which incentivizes and regulates request dispatching; (ii) the \emph{Duel-and-Judge Mechanism} (Subsection~\ref{subsec:duel_judge}), which ensures reliable and trustworthy contributor evaluation; and (iii) the \emph{Policy Framework} (Subsection~\ref{subsec:policy_framework}), which supports flexible policies for collaboration.

\subsection{Credit-based Transaction System}
\label{subsec:credit_system}

Drawing inspiration from real-world transactions, where users pay for premium LLM services (e.g., API token prices), we design a \emph{Credit-based Transaction System} in which each node’s computational resources are represented as transferable credits. These serve as a reputation-like measure that enables dynamic workload exchange while providing economic incentives for active and high-quality participation. Beyond the system itself, credits can be anchored to real-world currency, enabling direct monetization of computational contributions and paving the way for practical deployment of WWW.Serve in commercial large-scale inference services.

\begin{wraptable}{r}{0.36\linewidth}
    \vspace{-1.4em}
    \centering
    \scriptsize
    \renewcommand{\arraystretch}{1.5}
    \caption{Structure of a Credit Block}
    \label{tab:block}
    \begin{tabular}{|l|l|}
        \hline
        \textbf{Field} & \textbf{Description} \\ \hline
        Block ID & Hash of the current block \\ \hline
        Parent ID & Hash of the previous block \\ \hline
        Timestamp & Time of block creation \\ \hline
        Operations & List of credit-related records \\ \hline
        Proposer & Node proposing the block \\ \hline
        Signature & Digital signature \\ \hline
    \end{tabular}
\end{wraptable}

However, traditional transaction mechanisms are not sufficient in decentralized settings. Without a shared, tamper-resistant ledger, nodes can misreport their actions or selectively reveal inconsistent transaction histories to different peers~\citep{bitcoin,blockchainconsensus,consensusage,blockchainreview}. For example, a node might claim the same credits have been spent in multiple transactions (double-spending), or refuse to acknowledge deductions from failed or malicious executions. Since no single entity holds the authoritative record, such inconsistencies can hardly be reconciled, undermining both fairness and trust across the network.

To address this, WWW.Serve adopts a blockchain-inspired ledger. Each node maintains a local \emph{Credit Block Chain} that records activities such as staking and rewarding in tamper-resistant blocks (Table~\ref{tab:block}). Blocks are cryptographically linked, so any modification is immediately detectable. A credit transaction occurs whenever a delegated request is completed. The responsible node records this by creating a new block and broadcasting it to its peers, which independently validate the block. The transaction is finalized once a majority of peers confirm and append it to their local ledgers.

The security of this design relies on two complementary features. First, nodes must stake credits to participate in scheduling, which discourages malicious behavior by putting dishonest nodes’ stakes at risk. Second, decentralized verification ensures that every block is independently validated by multiple peers before being appended to the chain, preventing any single node from manipulating the ledger. Thus, balances are guaranteed to be secure, auditable, and tamper-resistant, all without relying on a centralized authority.

\subsection{Duel-and-Judge Mechanism}
\label{subsec:duel_judge}

\begin{wrapfigure}{r}{0.5\linewidth}
    \vspace{-2em}
    \centering
    \includegraphics[width=0.92\linewidth]{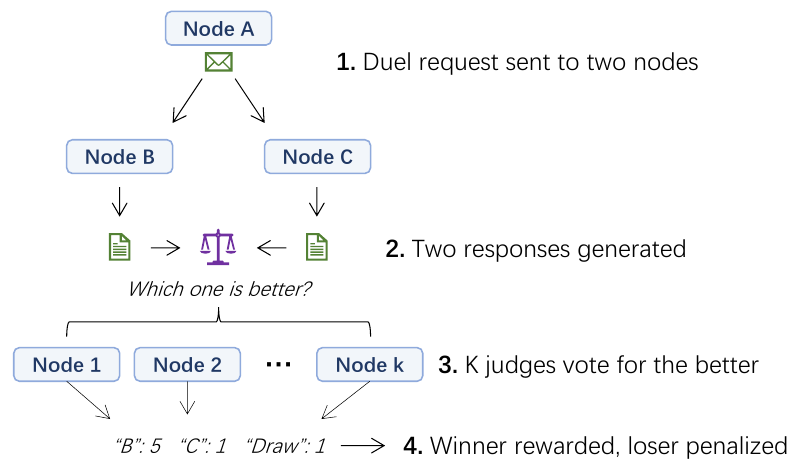}
    \caption{Duel-and-judge mechanism.}
    \label{fig:duel}
\end{wrapfigure}

In our decentralized serving network, participants are anonymous and heterogeneous, with no central authority to verify the quality of their contributions. This raises a fundamental risk: low-quality or even malicious nodes may provide incorrect results, degrading overall service reliability. Prior frameworks~\citep{llmchain,PoQ,gentorrent} rely on verification committees or light evaluation models, but they introduce complexity and privileged roles that limit true decentralization. In response, WWW.Serve introduces a \emph{duel-and-judge mechanism}, enabling peer-driven evaluation of the service quality.

As shown in Figure~\ref{fig:duel}, a small fraction of delegated requests are randomly designated as \emph{duel requests} and dispatched to two executors sampled via our Proof-of-Stake-based selection mechanism. Next, $k$ judges (also selected via PoS) perform pairwise comparisons of the responses. The inferior executor is penalized by losing part of its stake, while the superior executor and the responsible judges earn additional credits. The results of each duel are broadcast and recorded in the credit ledger, ensuring transparency and accountability.

Such a duel-and-judge mechanism offers several key advantages for ensuring reliable and high-quality decentralized serving. First, it leverages a pairwise comparison rather than relying on absolute scores. Prior studies~\citep{mtbench,chatbot-arena,pariksha} demonstrate that pairwise evaluation of LLM outputs yields higher inter-rater agreement and greater robustness, making it a more reliable way to distinguish between competing responses. Second, the involvement of PoS-sampled judge nodes introduces additional impartiality, mitigating risks of collusion and fostering fairness in the evaluation process. Third, the credit redistribution scheme provides strong economic incentives, aligning node behavior with system reliability and thus driving the network toward high-quality operation. A theoretical analysis of the quality evolution is provided in Section~\ref{sec:theory}.

\subsection{Policy Framework}
\label{subsec:policy_framework}

WWW.Serve introduces a policy framework that governs both individual node decisions and collective network behavior, which operates along two complementary dimensions:

\textbf{User-level policies:} enable service providers to manage their resources according to individual objectives. First, each node can freely determine its stake amount, which directly influences its probability of being selected as an executor under the Proof-of-Stake-based scheduling mechanism. This design encourages providers to calibrate their credit commitment according to their willingness and capacity to contribute. Second, nodes may define fine-grained operational conditions for offloading, accepting, or queuing requests at their local backends. For example, one may choose to offload tasks once its local workload surpasses a predefined threshold, to accept external requests only when spare GPU capacity is available, or to prioritize its own user-submitted jobs over delegated ones. Such flexibility not only accommodates heterogeneous resource profiles and business goals, but also fosters a competitive yet cooperative ecosystem where service providers optimize their participation strategies while maintaining overall system efficiency.

\textbf{System-level policies:} serve as global safeguards to preserve fairness and reliability within WWW.Serve, including mechanisms such as Proof-of-Stake-based routing, the credit-based transaction system, gossip-driven peer synchronization, and the duel-and-judge mechanism. These rules provide the necessary trustless foundation, while user-level policies offer flexibility on top of it.

\section{Game-Theory Analysis}
\label{sec:theory}

In this section, we provide a theoremized proof that WWW.Serve converges to a high-quality equilibrium of collaborative LLM services: high-performing nodes accumulate credit over time, whereas low-quality nodes lose exposure and gradually phase out of the system.

\begin{assumption}[Node parameters]
    \label{asm:node}
    For each node $i\in\{1,\dots,N\}$, we have:
    \begin{itemize}
    [itemsep=0.0pt,topsep=0pt]
        \item $q_i \in [0,1]$, the intrinsic probability that node $i$ produces a high-quality response;
        \item $c_i>0$, the per-request operational cost of node $i$;
        \item $s_i(t) \geq 0$, the stake of node $i$ at time $t$.
    \end{itemize}
\end{assumption}

\begin{assumption}[System parameters]
    \label{asm:system}
    The system-level constants are:
    \begin{itemize}
    [itemsep=0.0pt,topsep=0pt]
        \item $\lambda>0$, the delegated request arrival rate;
        \item $R>0$, the guaranteed base reward per delegated request;
        \item $p_d \in [0,1]$, the probability that a delegated request is selected as a duel;
        \item $R_{add}>0$, the additional reward for winning a duel;
        \item $P>0$, the penalty for losing a duel.
    \end{itemize}
\end{assumption}

\begin{assumption}[PoS-based selection and duel-and-judge mechanism]
    \label{asm:duel}
    We write the PoS selection probability of node $i$ and the selection-weighted global average quality as
    \[
    p_i(t) \;=\; \frac{s_i(t)}{\sum_{j=1}^N s_j(t)}, 
    \qquad
    \overline{Q}(t) \;=\; \sum_{i=1}^N p_i(t) \, q_i.
    \]
    To capture the intuition that a higher network average quality $\overline{Q}(t)$ makes it harder for any individual node to stand out, we model the probability that node $i$ wins the duel as
    \[
    Q_i(t) \;=\; \tfrac{1}{2} \big(1 + q_i - \overline{Q}(t)\big) \in [0, 1].
    \]
\end{assumption}

\begin{assumption}[Stake adjustment]
    \label{asm:stake}
    Rational participants adjust their stakes proportionally to realized expected payoffs. Concretely, for some growth constant $\eta>0$ we assume
    \[
    \dot s_i(t) \;=\; \eta \, \pi_i(t),
    \]
    where $\pi_i(t)$ denotes node $i$'s expected payoff rate (defined below in Lemma~\ref{lemma:payoff}).
\end{assumption}

\begin{lemma}[Expected node payoff]
    \label{lemma:payoff}
    Under Assumptions~\ref{asm:node}--\ref{asm:duel}, the expected payoff of node $i$ from serving a single delegated request is
    \[
    \Delta_i(t) \;=\; (R - c_i) + p_d \big[Q_i(t) \, R_{add} - (1-Q_i(t)) \, P\big].
    \]
    Consequently, the expected payoff rate of node $i$ under delegated request arrival rate $\lambda$ and PoS selection probability $p_i(t)$ is
    \[
    \pi_i(t) \;=\; \lambda \, p_i(t) \, \Delta_i(t).
    \]
\end{lemma}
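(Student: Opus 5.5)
The plan is to compute $\Delta_i(t)$ by conditioning on whether the delegated request is a duel. This uses linearity of expectation over the independent randomness in the mechanism. I will then get $\pi_i(t)$ by thinning the delegated arrival process according to PoS selection. Throughout, I treat $t$ as fixed. This means I freeze the stakes $s_j(t)$, and hence $p_i(t)$, $\overline{Q}(t)$ and $Q_i(t)$ from Assumption~\ref{asm:duel}, over the infinitesimal window in which a single request is served. This is the standard quasi-static reading of a rate equation.

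\textbf{Step 1 (deterministic part).} Condition on node $i$ being the executor of a delegated request. Regardless of the duel outcome, node $i$ receives the guaranteed base reward $R$ of Assumption~\ref{asm:system} and incurs its operational cost $c_i$ of Assumption~\ref{asm:node}. This contributes $R - c_i$ with probability one.

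\textbf{Step 2 (duel part).} With probability $p_d$ the request is designated a duel, independently of the executor choice. Given a duel, node $i$ wins with probability $Q_i(t)$ by Assumption~\ref{asm:duel}. On a win it gains $R_{add}$; otherwise, with probability $1-Q_i(t)$, it loses $P$. The conditional expected duel increment is therefore $Q_i(t)R_{add} - (1-Q_i(t))P$. With probability $1-p_d$ there is no duel and the increment is $0$. The law of total expectation then gives
\[
\Delta_i(t) = (R-c_i) + p_d\big[Q_i(t)R_{add} - (1-Q_i(t))P\big].
\]
I will add a remark on two modelling points. First, I treat a duel request as costing node $i$ the same $c_i$ as an ordinary request. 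Second, judge rewards accrue to a node only in its judge role, so they are excluded from the executor payoff. These are modelling conventions implicit in the statement rather than derived facts.

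\textbf{Step 3 (rate).} Delegated requests arrive at rate $\lambda$. Each is independently assigned to node $i$ with PoS probability $p_i(t)$, so node $i$ receives delegated requests at rate $\lambda p_i(t)$. By Poisson thinning, or simply by linearity of expected counts over a short interval $[t,t+h]$, its expected payoff accumulated over the interval is $\lambda p_i(t)\Delta_i(t)h + o(h)$. Dividing by $h$ and letting $h\to 0$ yields $\pi_i(t) = \lambda p_i(t)\Delta_i(t)$.

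The main obstacle is not the algebra but justifying the independence and freezing assumptions. I need the duel designation and its outcome to be independent of the executor selection, and the stakes to be effectively constant over the serving window. Duels also involve two PoS-sampled executors, so node $i$ can enter a request in either executor slot. My first approach is to absorb this into the modelling convention that $p_i(t)$ is node $i$'s per-request selection probability. The per-request payoff $\Delta_i(t)$ is then unaffected, since it is conditional on $i$ serving the request.
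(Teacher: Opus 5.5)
Your proposal is correct and follows the paper's route: you split the per-request payoff into the guaranteed $R-c_i$ term plus $p_d$ times the conditional duel expectation $Q_i(t)R_{add}-(1-Q_i(t))P$, then multiply by the rate $\lambda p_i(t)$ at which node $i$ is selected. Your remarks on independence, quasi-static stakes, judge rewards, and the second duel slot spell out modelling conventions that the paper leaves implicit, but they do not change the argument.
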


\begin{proof}
    A single delegated request always yields the base reward $R$ and incurs cost $c_i$, hence the guaranteed net term $(R-c_i)$. With probability $p_d$ the request becomes a duel; conditional on a duel, the expected duel outcome for node $i$ equals $Q_i(t)\,R_{add}-(1-Q_i(t))\,P$. Adding these terms gives $\Delta_i(t)$. Multiplying by the delegated request arrival rate $\lambda$ and the selection probability $p_i(t)$ yields the stated expression for $\pi_i(t)$.
\end{proof}

\begin{proposition}[Single-node stake-share dynamics]
    \label{prop:single}
    Under Assumptions~\ref{asm:node}--\ref{asm:stake}, the stake share of node $i$ evolves according to
    \begin{equation}\label{eq:single_payoff_prop}
    \dot p_i(t) \;=\; \frac{\eta \, \lambda}{S(t)} \, p_i(t) \big(\Delta_i(t) - \overline{\Delta}(t) \big),
    \end{equation}
    where $S(t)=\sum_{j} s_j(t)$ is the total stake in the network, and $\overline{\Delta}(t) = \sum_j p_j(t) \Delta_j(t)$ represents the overall average expected payoff.
\end{proposition}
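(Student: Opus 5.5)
The plan is a direct quotient-rule computation on $p_i = s_i/S$, using the stake dynamics of Assumption~\ref{asm:stake} and the payoff formula of Lemma~\ref{lemma:payoff}. Differentiating gives
\[
\dot p_i(t) \;=\; \frac{\dot s_i(t)}{S(t)} - \frac{s_i(t)\,\dot S(t)}{S(t)^2} \;=\; \frac{1}{S(t)}\bigl(\dot s_i(t) - p_i(t)\,\dot S(t)\bigr).
\]
This requires $S(t)>0$, which I will assume on the interval considered (otherwise $p_i$ is undefined). Note that nonnegativity of stakes and $S(t)>0$ together guarantee the $p_j$ are well defined probabilities summing to one.

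Next I substitute the dynamics. By Assumption~\ref{asm:stake} and Lemma~\ref{lemma:payoff}, $\dot s_i = \eta\,\pi_i = \eta\lambda\,p_i\,\Delta_i$. Summing over $j$ gives
\[
\dot S = \sum_j \dot s_j = \eta\lambda \sum_j p_j \Delta_j = \eta\lambda\,\overline{\Delta}(t).
\]
Plugging both expressions into the quotient-rule formula yields
\[
\dot p_i = \frac{1}{S}\bigl(\eta\lambda p_i \Delta_i - p_i\,\eta\lambda\overline{\Delta}\bigr) = \frac{\eta\lambda}{S}\,p_i\bigl(\Delta_i-\overline{\Delta}\bigr),
\]
which is \eqref{eq:single_payoff_prop}.

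There is no real obstacle; the only points needing care are bookkeeping. First, $\Delta_i(t)$ depends on $t$ only through $\overline{Q}(t)$, but it is not differentiated, so nothing is needed there. Second, the step most worth checking is the identification $\sum_j p_j\Delta_j = \overline{\Delta}$ in the expression for $\dot S$: it is exactly where the definition of $\pi_j$ as proportional to $p_j$ makes the replicator form appear.
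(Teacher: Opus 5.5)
Your proposal is correct and follows essentially the same route as the paper: quotient rule on $p_i = s_i/S$, substitution of $\dot s_i = \eta\lambda p_i\Delta_i$, and summation giving $\dot S = \eta\lambda\overline{\Delta}$. Your explicit caveat that $S(t)>0$ is a harmless addition that the paper leaves implicit.
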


\begin{proof}
    Differentiate $p_i(t)=s_i(t)/S(t)$ to obtain
    \[
    \dot p_i(t) \;=\; \frac{\dot s_i(t) S(t) - s_i(t)\dot S(t)}{S(t)^2}.
    \]
    By Assumption~\ref{asm:stake} we have $\dot s_i(t)=\eta\pi_i(t)=\eta\lambda p_i(t)\Delta_i(t)$, and summing over $i$ yields
    \[
    \dot S(t)=\sum_{j}\dot s_j(t)=\eta\lambda\sum_j p_j(t)\Delta_j(t)=\eta\lambda\,\overline{\Delta}(t).
    \]
    Substituting these into the derivative and simplifying gives \eqref{eq:single_payoff_prop}.
\end{proof}

\begin{proposition}[Group-level stake-share dynamics]
    \label{prop:group}
    Let $\mathbb{H}\subseteq\{1,\dots,N\}$ be any subset of nodes, and define its group-level stake share as
    \[
    p_H(t) \;=\; \sum_{i\in H} p_i(t).
    \]
    Define the within-group and outside-group average payoffs as
    \[
    \overline{\Delta}_H(t) \;=\; \frac{1}{p_H(t)} \sum_{i\in H} p_i(t) \, \Delta_i(t),
    \qquad
    \overline{\Delta}_{\neg H}(t) \;=\; \frac{1}{1-p_H(t)} \sum_{j\notin H} p_j(t) \, \Delta_j(t).
    \]
    Then the group-level stake share evolves according to
    \begin{equation}\label{eq:group_payoff_prop}
    \dot p_H(t) \;=\; \frac{\eta \, \lambda}{S(t)} \, p_H(t) (1-p_H(t)) \big(\overline{\Delta}_H(t) - \overline{\Delta}_{\neg H}(t)\big).
    \end{equation}
\end{proposition}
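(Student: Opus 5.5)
The plan is to sum the single-node dynamics of Proposition~\ref{prop:single} over $i\in H$ and then rewrite the global average $\overline{\Delta}(t)$ as a convex combination of the within-group and outside-group averages. Since $p_H(t)=\sum_{i\in H}p_i(t)$ is a finite sum of differentiable functions, we have $\dot p_H(t)=\sum_{i\in H}\dot p_i(t)$. Substituting \eqref{eq:single_payoff_prop} gives
\[
\dot p_H(t)=\frac{\eta\lambda}{S(t)}\Big(\sum_{i\in H}p_i(t)\Delta_i(t)-\overline{\Delta}(t)\sum_{i\in H}p_i(t)\Big)
=\frac{\eta\lambda}{S(t)}\,p_H(t)\big(\overline{\Delta}_H(t)-\overline{\Delta}(t)\big),
\]
where the second equality uses the definition of $\overline{\Delta}_H(t)$ to identify $\sum_{i\in H}p_i\Delta_i=p_H\overline{\Delta}_H$.

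Next, split the global average over $H$ and its complement. Because $\sum_j p_j(t)=1$, the complement has total share $1-p_H(t)$, and the definition of $\overline{\Delta}_{\neg H}$ gives
\[
\overline{\Delta}(t)=p_H(t)\,\overline{\Delta}_H(t)+(1-p_H(t))\,\overline{\Delta}_{\neg H}(t).
\]
Hence $\overline{\Delta}_H-\overline{\Delta}=(1-p_H)\big(\overline{\Delta}_H-\overline{\Delta}_{\neg H}\big)$. Inserting this into the previous display yields \eqref{eq:group_payoff_prop}.

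The main (and only mild) obstacle is that the group averages are undefined when $p_H(t)\in\{0,1\}$, that is, when $H$ or its complement has zero stake. In these boundary cases, I would interpret the product $p_H\overline{\Delta}_H$ (respectively $(1-p_H)\overline{\Delta}_{\neg H}$) directly as the sum $\sum_{i\in H}p_i\Delta_i$ (respectively over $j\notin H$), which is zero there. Both sides of \eqref{eq:group_payoff_prop} then vanish: all $p_i$ with $i\in H$ are zero, so each $\dot p_i=0$ by \eqref{eq:single_payoff_prop}, and symmetrically for the complement. I would also note that $S(t)>0$ is implicitly required, as it already is in Proposition~\ref{prop:single}.
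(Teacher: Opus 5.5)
Your proposal is correct and follows the paper's proof essentially step for step: sum \eqref{eq:single_payoff_prop} over $i\in H$, write $\overline{\Delta}(t)=p_H(t)\overline{\Delta}_H(t)+(1-p_H(t))\overline{\Delta}_{\neg H}(t)$, and simplify. Your treatment of the degenerate cases $p_H(t)\in\{0,1\}$ and of the standing requirement $S(t)>0$ is a small, valid addition that the paper leaves implicit.
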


\begin{proof}
    Summing \eqref{eq:single_payoff_prop} over $i\in H$ yields
    \[
    \dot p_H(t) \;=\; \frac{\eta\lambda}{S(t)}\Big(\sum_{i\in H} p_i(t)\Delta_i(t)-p_H(t)\overline{\Delta}(t)\Big).
    \]
    Write the network average $\overline{\Delta}(t)$ as the convex combination
    \[
    \overline{\Delta}(t) \;=\; p_H(t)\overline{\Delta}_H(t)+(1-p_H(t))\overline{\Delta}_{\neg H}(t).
    \]
    Substituting this into the previous display and simplifying produces \eqref{eq:group_payoff_prop}.
\end{proof}

\begin{theorem}[High-quality equilibrium]
    \label{thm:equilibrium}
    Under Assumptions~\ref{asm:node}--\ref{asm:stake}, the network converges to a high-quality equilibrium, driven by a subset of superior nodes, thereby promoting reliable and high-quality LLM services.
\end{theorem}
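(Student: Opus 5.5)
The plan is to turn the informal statement into a concrete claim about the replicator-type dynamics of Proposition~\ref{prop:single}, and then prove it with a Lyapunov/monotonicity argument.

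\textbf{Step 1: make the payoff affine in quality.} Substituting $Q_i(t)=\tfrac12(1+q_i-\overline{Q}(t))$ from Assumption~\ref{asm:duel} into Lemma~\ref{lemma:payoff} gives
\[
\Delta_i(t) \;=\; (R-c_i) + p_d\Big[\tfrac{R_{add}+P}{2}\big(1+q_i-\overline{Q}(t)\big) - P\Big].
\]
Here the term depending on $\overline{Q}(t)$ is common to all nodes. It therefore cancels in $\Delta_i-\overline{\Delta}$, and
\[
\Delta_i(t)-\overline{\Delta}(t) \;=\; \kappa\,(q_i - \overline{Q}(t)) - (c_i - \overline{c}(t)),
\]
with $\kappa=p_d(R_{add}+P)/2>0$ and $\overline{c}(t)=\sum_j p_j c_j$. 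Define the time-invariant fitness $f_i=\kappa q_i - c_i$. Then \eqref{eq:single_payoff_prop} becomes a replicator equation with \emph{constant} fitnesses:
\[
\dot p_i=\frac{\eta\lambda}{S(t)}\,p_i\,(f_i-\overline f),\qquad \overline f=\sum_j p_j f_j .
\]

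\textbf{Step 2: reparametrize time and solve the pairwise ratios.} Introduce the new time $\tau(t)=\int_0^t \eta\lambda/S(u)\,du$. For any two nodes $i,j$ with positive initial stake,
\[
\frac{d}{d\tau}\log\frac{p_i}{p_j}=f_i-f_j,
\]
so $p_i/p_j=\big(p_i(0)/p_j(0)\big)e^{(f_i-f_j)\tau}$. Let $H$ be the set of nodes attaining $\max_i f_i$ among those with $p_i(0)>0$. Proposition~\ref{prop:group} with $\mathbb H=H$ shows that $p_H$ is nondecreasing. Moreover, every $j\notin H$ satisfies $p_j\le p_j(0)/p_i(0)\,e^{-(f_i-f_j)\tau}\to0$ whenever $\tau\to\infty$. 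In that case $p_H\to1$ and $\overline{Q}(t)\to \overline{Q}_H$, a convex combination of the qualities of the fitness-maximizing nodes. Under equal costs, or costs small relative to $\kappa$, these are exactly the highest-quality nodes. This gives the claimed \emph{high-quality equilibrium driven by a subset of superior nodes}. I would also note that $\overline f$ is nondecreasing, since $d\overline f/d\tau=\mathrm{Var}_p(f)\ge0$, which serves as a Lyapunov function. This makes the monotone improvement of selection-weighted fitness explicit.

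\textbf{Step 3 (main obstacle): show $\tau(t)\to\infty$.} This requires controlling $S(t)$, which satisfies $\dot S=\eta\lambda\overline\Delta$. Stakes must stay nonnegative, and $S$ could in principle blow up or hit zero. My approach would be to bound $\overline\Delta(t)$ uniformly: all quantities are bounded, so $|\overline\Delta|\le D$. Then $S(t)\le S(0)+\eta\lambda D t$, and $\int \eta\lambda/S\,dt\ge\int \eta\lambda/(S(0)+\eta\lambda Dt)\,dt=\infty$, which diverges logarithmically. Hence $\tau\to\infty$.

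If $\overline\Delta$ is negative and $S$ shrinks, $\tau$ grows even faster. The remaining issue is $S$ reaching $0$ or individual $s_i$ becoming negative. I would handle this by assuming (or imposing) that nodes with $\Delta_i<0$ are absorbed at $s_i=0$, i.e.\ they exit. This only reinforces the conclusion, since exiting nodes are precisely the low-fitness ones. I would state explicitly the mild assumption that at least one node with $\Delta_i>0$, i.e.\ $R-c_i$ large enough, has positive initial stake, so that $S$ stays bounded away from $0$.
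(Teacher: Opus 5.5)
Your argument is correct, but it takes a genuinely different and much stronger route than the paper. The paper's proof is purely conditional. It \emph{assumes} a subset $\mathbb{H}$ and a time $T$ with $\overline{\Delta}_H(t)>\overline{\Delta}_{\neg H}(t)$ for all $t\ge T$, and then reads off $\dot p_H>0$ from Proposition~\ref{prop:group}. It never shows that such an $\mathbb{H}$ exists, never identifies it, and never shows $p_H\to 1$; monotonicity alone only gives some limit $\le 1$.

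Your key observation is that the $\overline{Q}(t)$-dependent part of $\Delta_i$ is common to all nodes, so $\Delta_i-\Delta_j=f_i-f_j$ with constant $f_i=\kappa q_i-c_i$. This reduces the system to a time-changed replicator equation with constant fitnesses. It verifies the paper's hypothesis for $\mathbb{H}=\arg\max_i f_i$ for all $t\ge 0$ rather than only eventually. It also gives genuine convergence $p_H\to1$ with limit quality $\overline{Q}_H$, and supplies the Lyapunov function $\overline f$.

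The cost of your route is the extra work on $S(t)$ and some side conditions the paper leaves implicit. It also exposes a caveat the paper's phrasing hides: ``superior'' means large $\kappa q_i-c_i$, not large $q_i$. With heterogeneous costs, a cheap low-quality node can therefore dominate. With $p_d=0$, which Assumption~\ref{asm:system} permits, quality is irrelevant, so you should state $p_d>0$ explicitly, since $\kappa>0$ depends on it.

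Three small refinements:
\begin{itemize}
\item With unequal small costs, the survivors are the cheapest among the highest-quality nodes, a subset rather than ``exactly'' that set.
\item The absorbing-boundary device is unnecessary. Since $\dot s_i=\eta\lambda s_i\Delta_i/S$ is proportional to $s_i$, each stake stays positive as long as $S>0$.
\item To make Step 3 precise, note $\Delta_i\ge R-c_i-p_dP$ because $Q_i\in[0,1]$. So one node with $R>c_i+p_dP$ and $s_i(0)>0$ has nondecreasing stake and keeps $S\ge s_i(0)$. Alternatively, averaging gives $\overline{\Delta}(t)=R-\overline c(t)+p_d(R_{add}-P)/2$, which depends only on the average cost, so $S$ is nondecreasing whenever $R+p_d(R_{add}-P)/2\ge\max_i c_i$.
\end{itemize}
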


\begin{proof}
    From Proposition~\ref{prop:group}, if there exists a subset $\mathbb{H}$ and a time $T$ such that for all $t\ge T$,
    \[
    \overline{\Delta}_H(t)>\overline{\Delta}_{\neg H}(t),
    \]
    then $\dot p_H(t)>0$, hence $p_H(t)$ is strictly increasing for $t\ge T$. Consequently, high-quality nodes progressively accumulate credit while low-quality nodes lose influence, creating incentives for participants to provide superior services and guiding the network toward reliable and high-quality LLM serving.
\end{proof}

\section{Empirical Evaluation}
\label{sec:application}

In this section, we evaluate WWW.Serve under diverse configurations and workload scenarios (implementation details are provided in Appendix~\ref{app:implementation}):
\begin{itemize}
[itemsep=0.0pt,topsep=0pt,leftmargin=*]
    \item In Subsection~\ref{subsec:efficiency}, we show that WWW.Serve improves global SLO attainment by up to $\mathbf{1.5\times}$ and reduces latency by $\mathbf{27.6\%}$ compared to single-node deployment, achieving efficiency close to centralized scheduling.
    \item In Subsection~\ref{subsec:dynamic}, we demonstrate that WWW.Serve handles dynamic participation gracefully, maintaining service continuity as computational resources join or leave.
    \item In Subsection~\ref{subsec:quality_incentivization}, we confirm that WWW.Serve effectively incentivizes superior LLM services, with credit accumulation favoring higher-quality models, more advanced serving systems, and faster hardware.
\end{itemize}

\subsection{Scheduling Efficiency}
\label{subsec:efficiency}

\begin{figure}[t]
    \centering
    \includegraphics[width=0.5\linewidth]{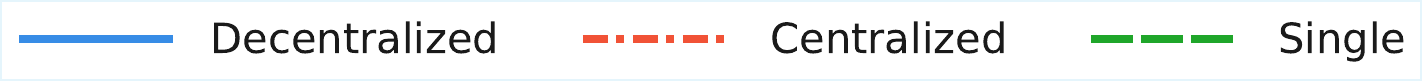}\\
    \includegraphics[width=0.24\linewidth]{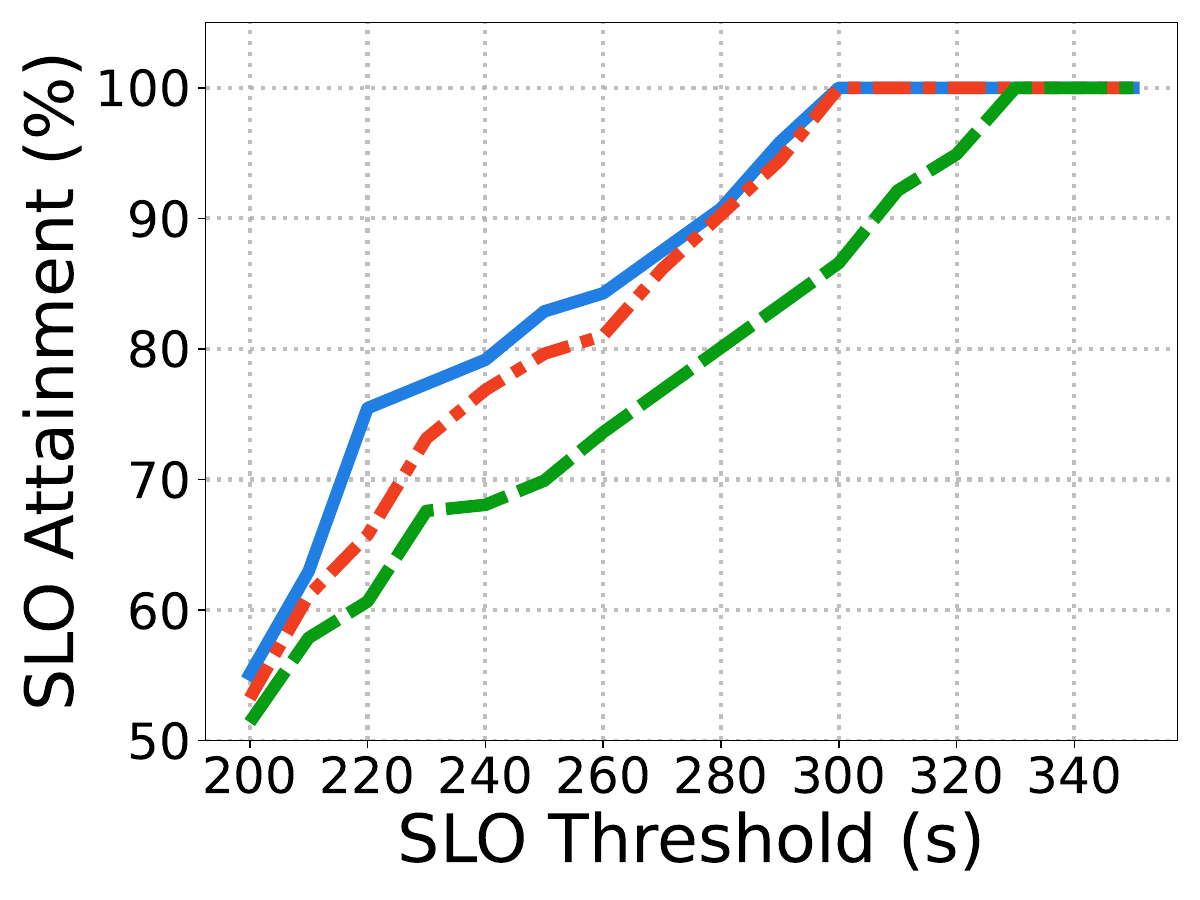}
    \includegraphics[width=0.24\linewidth]{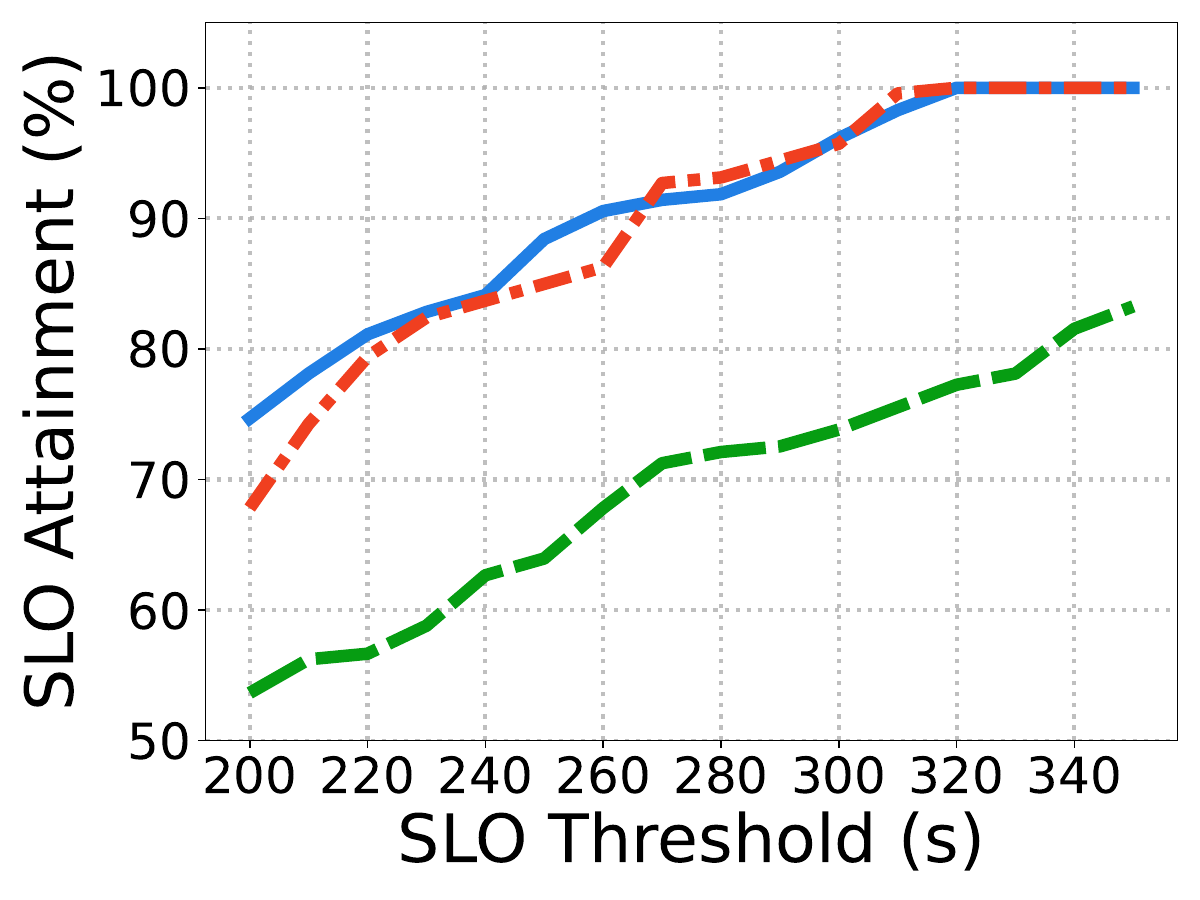}
    \includegraphics[width=0.24\linewidth]{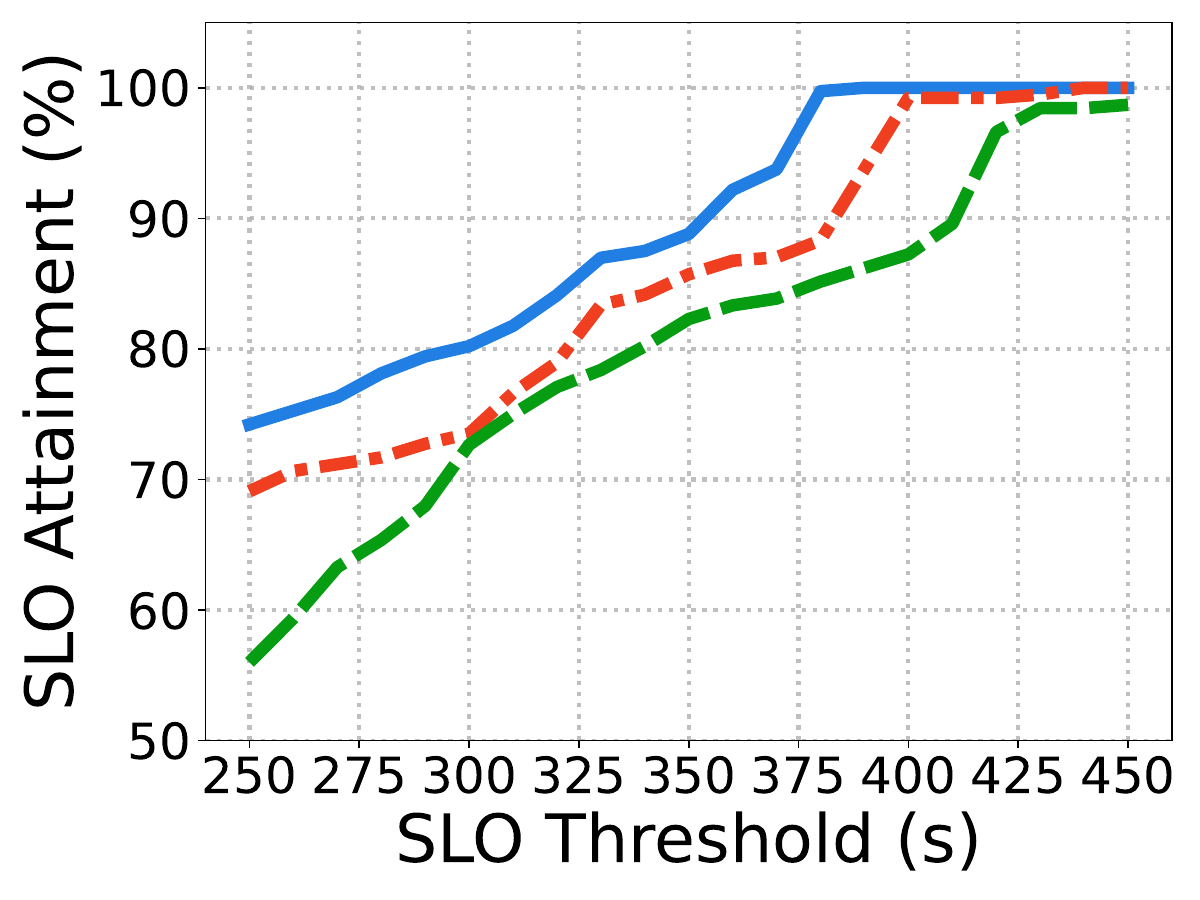}
    \includegraphics[width=0.24\linewidth]{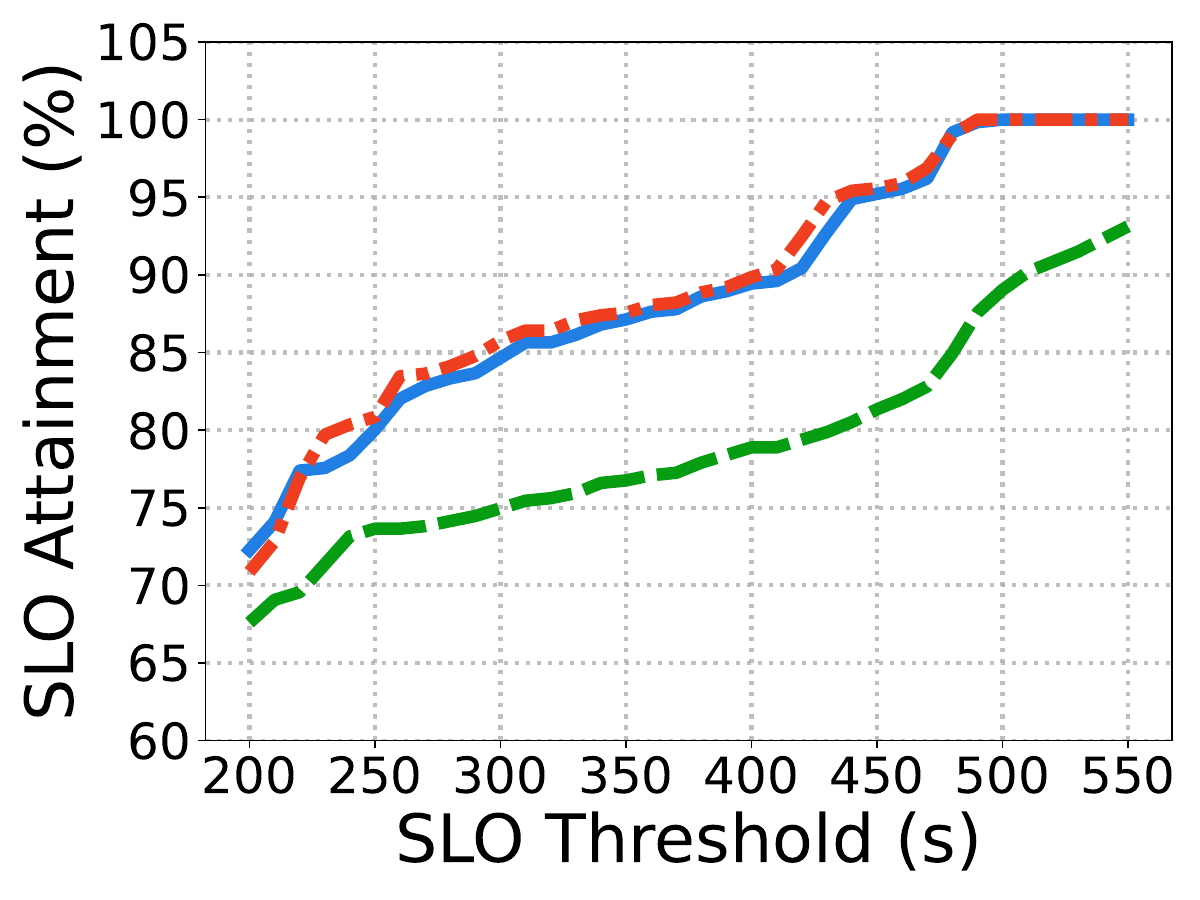}
    \caption{Comparison of global SLO attainment across single-node, centralized, and decentralized (WWW.Serve) deployments under four different experimental settings (Settings 1-4 from left to right; see Appendix~\ref{app:setting} for details).}
    \label{fig:basic}
\end{figure}

We first designed a variety of deployment scenarios (details in Appendix~\ref{app:setting}), covering heterogeneous models, diverse GPU hardware, and multiple serving backends. Each node experienced alternating peak and idle periods, simulating realistic fluctuations in service demand. We compared three deployment strategies: single, centralized, and our decentralized scheduling, and measured global Service Level Objective (SLO) attainment (i.e., the proportion of requests completed within predefined latency thresholds) along with the average request latency.

\begin{wraptable}{r}{0.43\linewidth}
    \vspace{-1.5em}
    \centering
    \scriptsize
    \setlength{\tabcolsep}{2.2mm}
    \renewcommand{\arraystretch}{1.2}
    \caption{Average request latency comparing different scheduling strategies.}
    \label{tab:latency}
    \begin{tabular}{c|ccc}
        \toprule
        \multirow{2}{*}{\textbf{Setting}} & \multicolumn{3}{c}{\textbf{Avg. Latency (s)}} \\
        & Single & Centralized & Decentralized \\
        \midrule
        Setting 1 & 200.380 & 188.419 & \textbf{184.400} \\
        Setting 2 & 226.578 & \textbf{168.221} & 168.485 \\
        Setting 3 & 237.925 & 206.123 & \textbf{198.306} \\
        Setting 4 & 241.042 & \textbf{169.896} & 174.592 \\
        \bottomrule
    \end{tabular}
\end{wraptable}

As shown in Figure~\ref{fig:basic}, across all experimental settings, WWW.Serve consistently outperforms single-node deployment and closely matches, in some cases even surpasses, centralized scheduling in terms of SLO attainment. Table~\ref{tab:latency} further demonstrates that this efficiency translates into substantially lower request latency. Together, these results highlight a key advantage of WWW.Serve: it achieves near-centralized scheduling efficiency without compromising the privacy and autonomy afforded by decentralization.

\subsection{Dynamic Participation}
\label{subsec:dynamic}

\begin{figure}[t]
    \centering
    \includegraphics[width=0.7\linewidth]{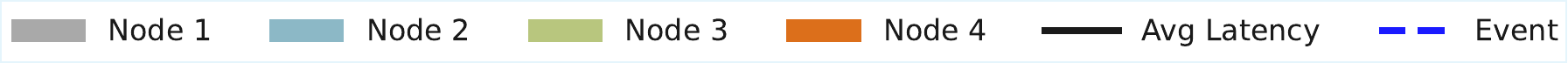}\\
    \begin{subfigure}{0.45\linewidth}
        \centering
        \includegraphics[width=0.9\linewidth]{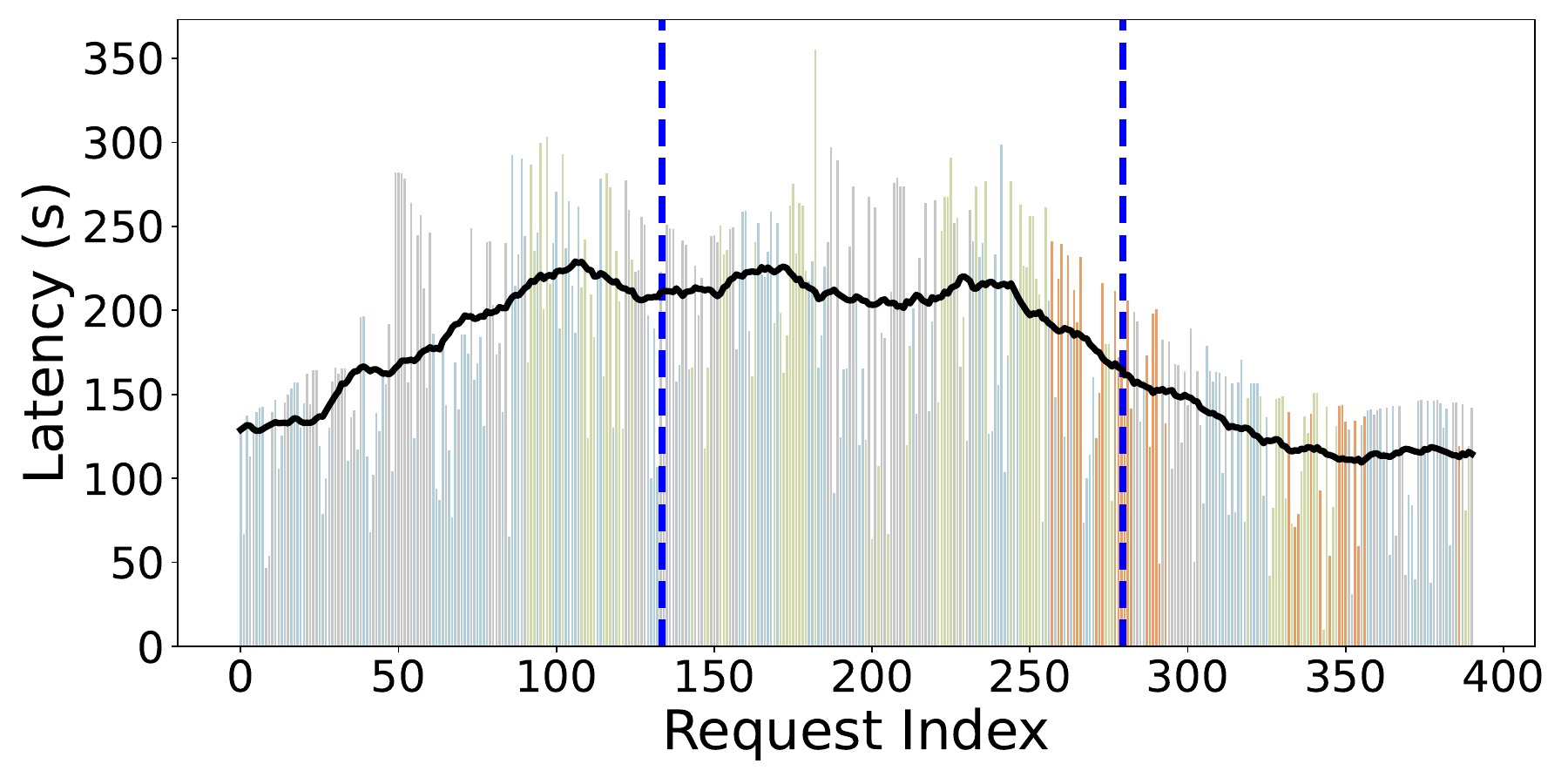}
        \caption{Node Join}
        \label{fig:node-join}
    \end{subfigure}
    \begin{subfigure}{0.45\linewidth}
        \centering
        \includegraphics[width=0.9\linewidth]{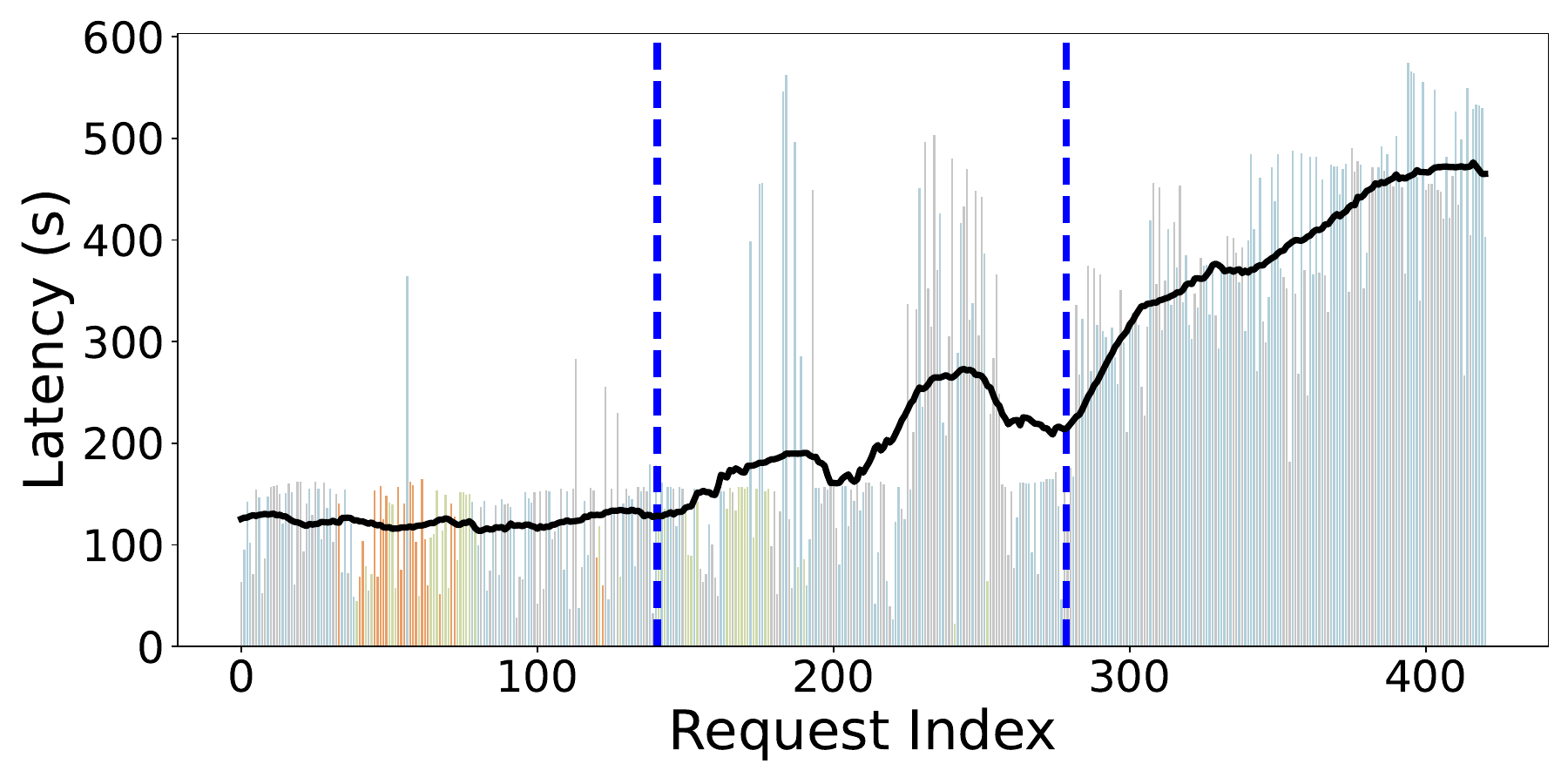}
        \caption{Node Leave}
        \label{fig:node-leave}
    \end{subfigure}
    \caption{Request latency under dynamic participation. Blue lines indicate node join/leave events; black lines show the windowed average latency.}
\end{figure}

WWW.Serve is designed to operate under highly dynamic and unpredictable resource availability in real-world scenarios. Thus, we evaluate its ability to adapt to arbitrary node arrivals and departures.

Figure~\ref{fig:node-join} illustrates nodes joining the network sequentially, starting with two active nodes. When the workload temporarily exceeds available resources, request latencies initially rise. As new nodes are integrated, the gossip-based protocol quickly detects them and redistributes requests, leading to a clear reduction in latency. Conversely, Figure~\ref{fig:node-leave} starts with four nodes and two leave the network sequentially. As the average load increases, the remaining nodes become increasingly saturated, resulting in a sharp rise in overall latency. These results demonstrate that WWW.Serve can dynamically adapt its workload distribution to both node arrivals and departures without a central coordinator, ensuring service continuity in unstable environments.

\subsection{Quality Incentivization}
\label{subsec:quality_incentivization}

\begin{figure}[t]
    \centering
    \begin{subfigure}{0.49\linewidth}
        \centering
        \includegraphics[width=\linewidth]{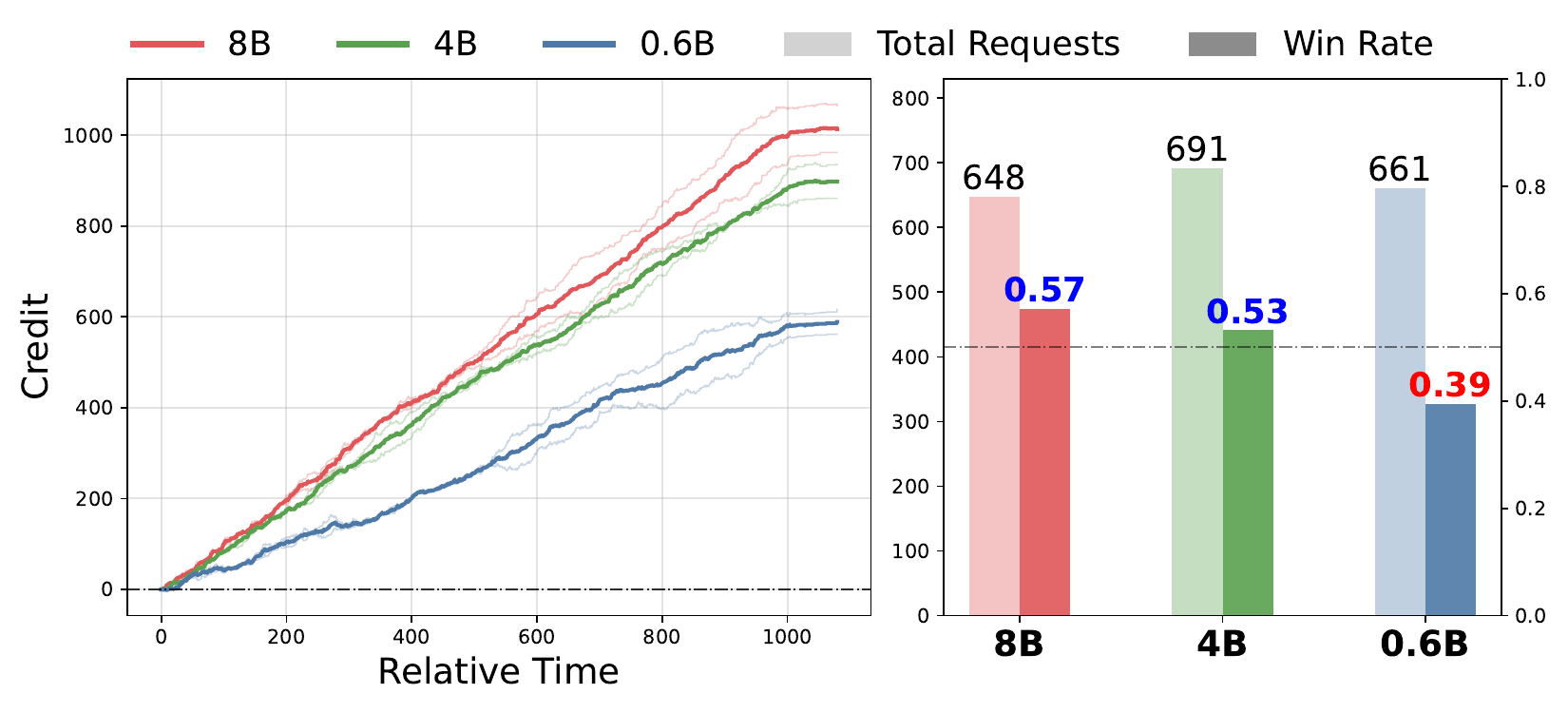}
        \caption{Model capacity}
        \label{fig:quality-model}
    \end{subfigure}
    \hfill
    \begin{subfigure}{0.49\linewidth}
        \centering
        \includegraphics[width=\linewidth]{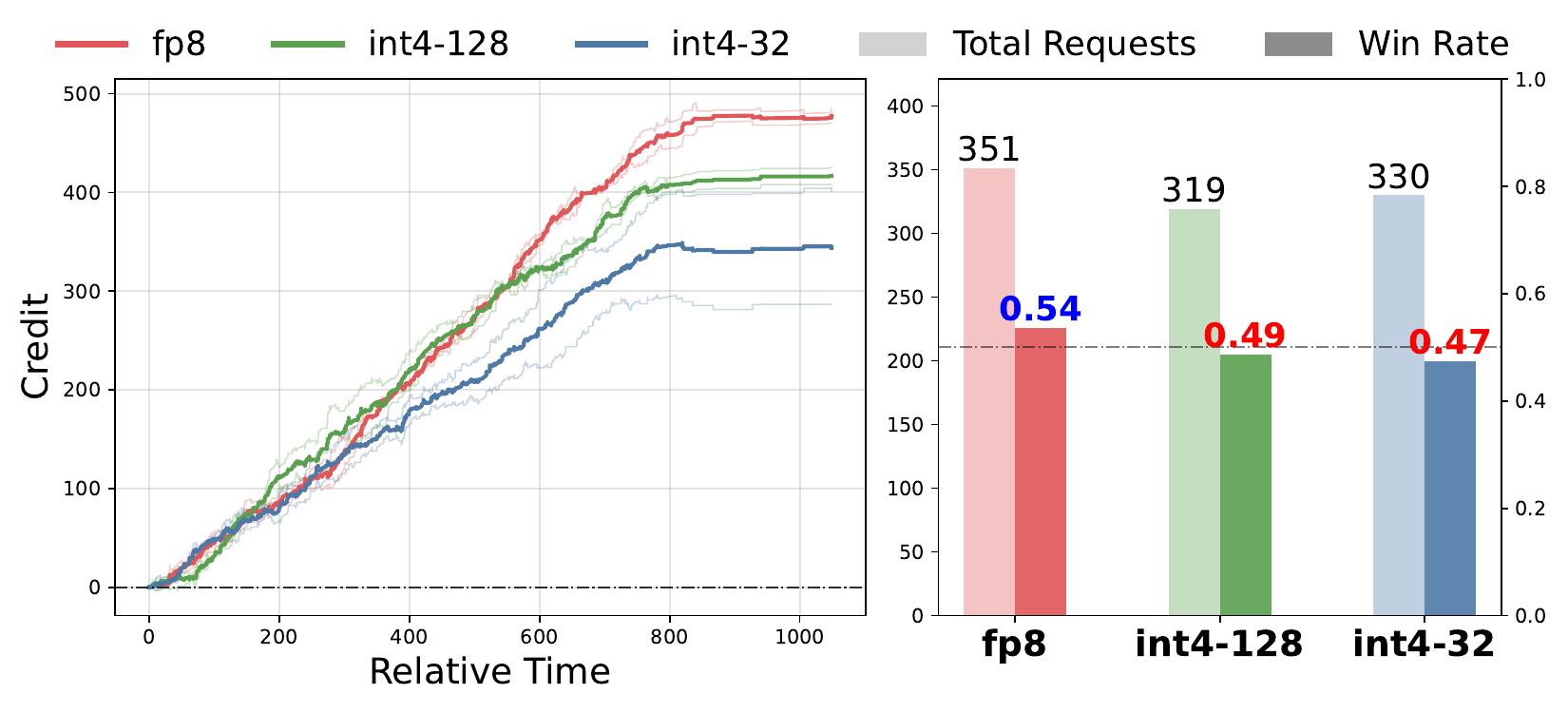}
        \caption{Quantization}
        \label{fig:quality-quant}
    \end{subfigure}\\
    \begin{subfigure}{0.49\linewidth}
        \centering
        \includegraphics[width=\linewidth]{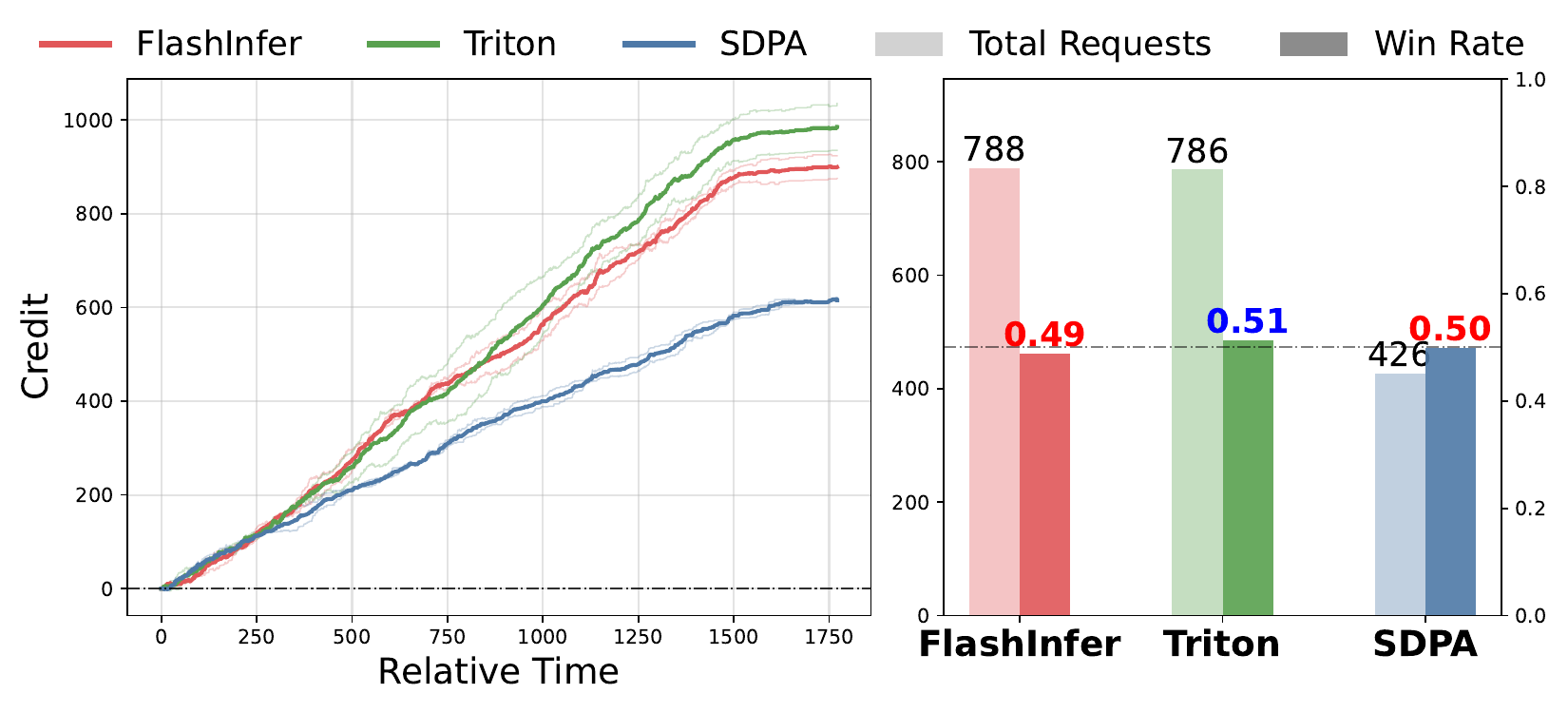}
        \caption{Serving efficiency}
        \label{fig:quality-attn}
    \end{subfigure}
    \hfill
    \begin{subfigure}{0.49\linewidth}
        \centering
        \includegraphics[width=\linewidth]{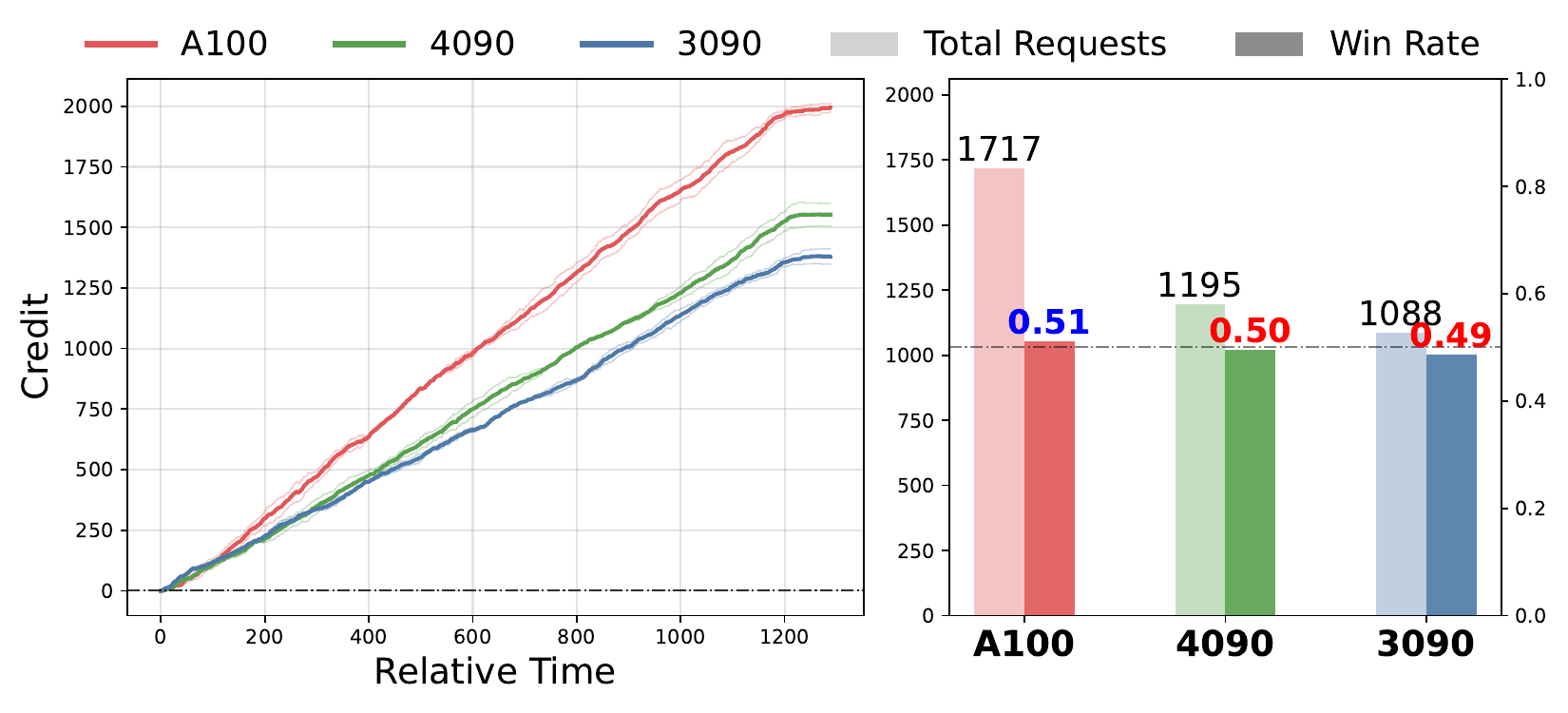}
        \caption{Hardware}
        \label{fig:quality-hw}
    \end{subfigure}
    \caption{Credit dynamics under heterogeneous node capabilities. For each setting, the left panel shows the evolution of credit over time, while the right panel reports the total number of served requests and the duel-and-judge win rate.}
\end{figure}

WWW.Serve is designed to incentivize providers to deliver high-quality LLM services. To empirically validate this objective, we conduct four controlled experiments. In each experiment, we deploy three classes of nodes with heterogeneous capabilities, with two replicas per class to mitigate instance-level randomness. These experiments examine how the credit mechanism rewards: (i) higher-quality models, (ii) more advanced serving systems, and (iii) faster hardware.

\textbf{Model capacity.} (Figure~\ref{fig:quality-model}) Nodes serve Qwen3 models of different sizes (8B, 4B, and 0.6B). Larger models consistently achieve higher duel-and-judge win rates (0.57, 0.53, and 0.39), indicating superior response quality. This advantage is reflected in the credit trajectories, with credit accumulating progressively faster as model size increases from 0.6B to 8B.

\textbf{Quantization.} (Figure~\ref{fig:quality-quant}) Nodes serve the same Qwen3-8B model under different quantization strategies based on TorchAO~\citep{torchao}: fp8wo, int4wo-128, and int4wo-32. More aggressive quantization degrades response quality, resulting in lower duel-and-judge win rates (0.54, 0.49, and 0.47). Consequently, nodes with heavier quantization exhibit slower credit accumulation.

\textbf{Serving efficiency.} (Figure~\ref{fig:quality-attn}) Nodes serve the same Qwen3-8B model using different attention backends: FlashInfer~\citep{flashinfer}, Triton~\citep{triton}, and SDPA~\citep{sdpa}. More efficient backends achieve substantially higher request throughput and serve more requests (788, 786, and 426). Since duel-and-judge win rates remain comparable across backends, differences in credit accumulation are primarily driven by throughput advantages.

\textbf{Hardware resources.} (Figure~\ref{fig:quality-hw}) Nodes serve the same Qwen3-8B model on different GPUs: A100, RTX4090, and RTX3090. Nodes equipped with higher computational capacity and larger GPU memory achieve higher request throughput and serve more concurrent requests (1717, 1195, and 1088), which in turn leads to faster credit accumulation.

Across all four experiments, WWW.Serve consistently aligns credit accumulation with both service quality and serving efficiency: nodes with higher-quality models or less aggressive quantization achieve higher duel-and-judge win rates, while nodes with more efficient serving systems or stronger hardware serve more requests. Together, these factors drive credit accumulation, showing that WWW.Serve effectively incentivizes better LLM service.

\section{Ablation Study}
\label{sec:ablation}

In this section, we first analyze the overhead of the duel-and-judge mechanism (subsection~\ref{subsec:duel_overhead}), and then study the effects of user-level policy configurations on system behavior (subsec~\ref{subsec:policies}).

\subsection{Overhead of Duel-and-Judge Mechanism}
\label{subsec:duel_overhead}

To assess the overhead introduced by the duel-and-judge mechanism, we provide a theoretical analysis, followed by an empirical evaluation of latency and SLO attainment under different duel rates.

We first quantify the incremental request load. Let:
\begin{itemize}
[itemsep=0.0pt,topsep=0pt]
    \item $N$: total number of user requests across all nodes;
    \item $\alpha$: request delegation rate ($\alpha N$ requests are offloaded for remote inference);
    \item $p_d$: duel rate (a fraction $p_d$ of delegated requests are selected as duel requests);
    \item $k$: number of judges per duel.
\end{itemize}
Each duel request triggers one challenger inference and $k$ judge evaluations, contributing $(1+k)$ additional requests. Thus, the expected number of extra requests introduced by the duel-and-judge mechanism is
\[
N \alpha \, p_d \, (1+k),
\]
which remains modest compared to the overall serving workload.

\begin{wrapfigure}{r}{0.48\linewidth}
    \vspace{-1.6em}
    \centering
    \includegraphics[width=0.7\linewidth]{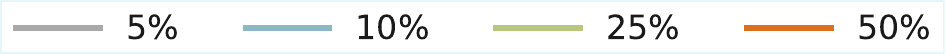}\\
    \includegraphics[width=0.48\linewidth]{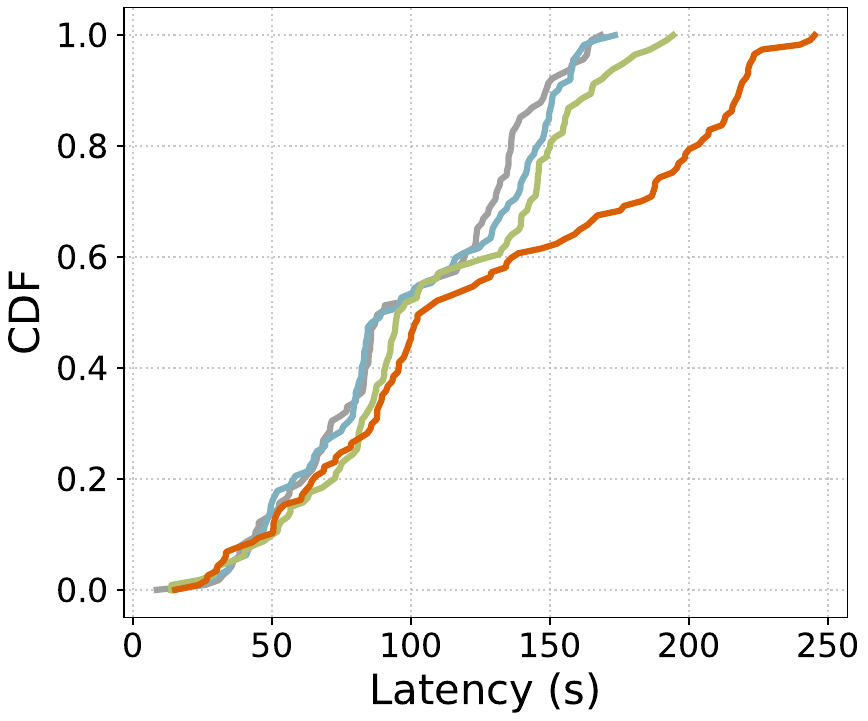}
    \includegraphics[width=0.48\linewidth]{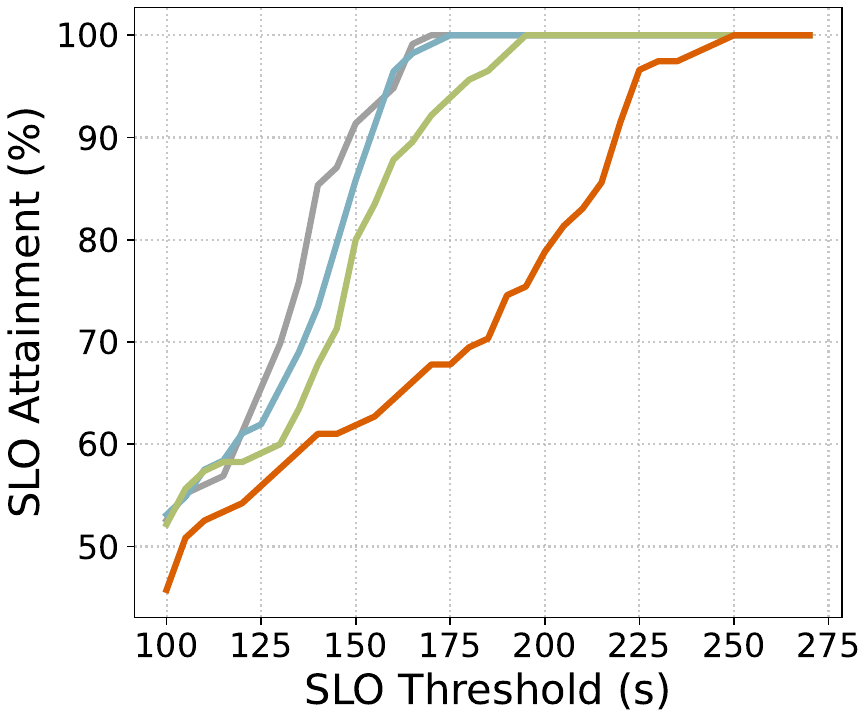}
    \caption{Latency CDF (left) and SLO attainment (right) for different duel rates.}
    \label{fig:duel_latency_slo}
\end{wrapfigure}

To empirically evaluate the effect of duel rate on system performance, we conduct an ablation study using four nodes, with $k=2$ judges per duel. Requests are uniformly issued by a dedicated requester-only node. This configuration intentionally imposes a higher load than typical deployments: fewer nodes yet multiple judges per duel amplify the relative overhead. As shown in Figure~\ref{fig:duel_latency_slo}, duel probabilities of 5\%, 10\%, and 25\% yield nearly identical latency CDFs and SLO attainment curves, indicating that moderate duel rates introduce minimal overhead.

\subsection{Impact of User-Level Policies}
\label{subsec:policies}

\begin{figure}[t]
    \centering
    \begin{subfigure}{0.67\linewidth}
        \centering
        \includegraphics[width=0.6\linewidth]{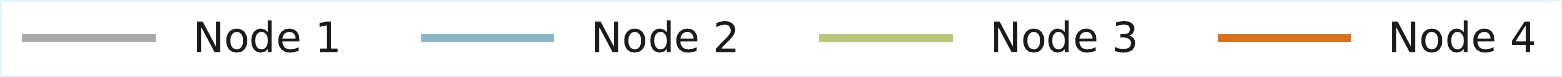}
    \end{subfigure}
    \hfill
    \begin{subfigure}{0.32\linewidth}
        \centering
        \includegraphics[width=\linewidth]{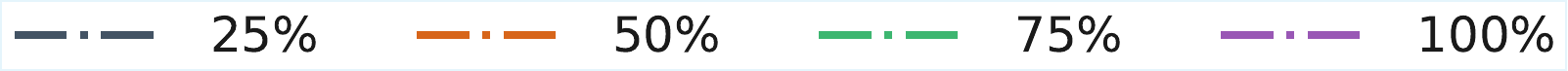}
    \end{subfigure}\\
    \begin{subfigure}{0.33\linewidth}
        \centering
        \includegraphics[width=\linewidth]{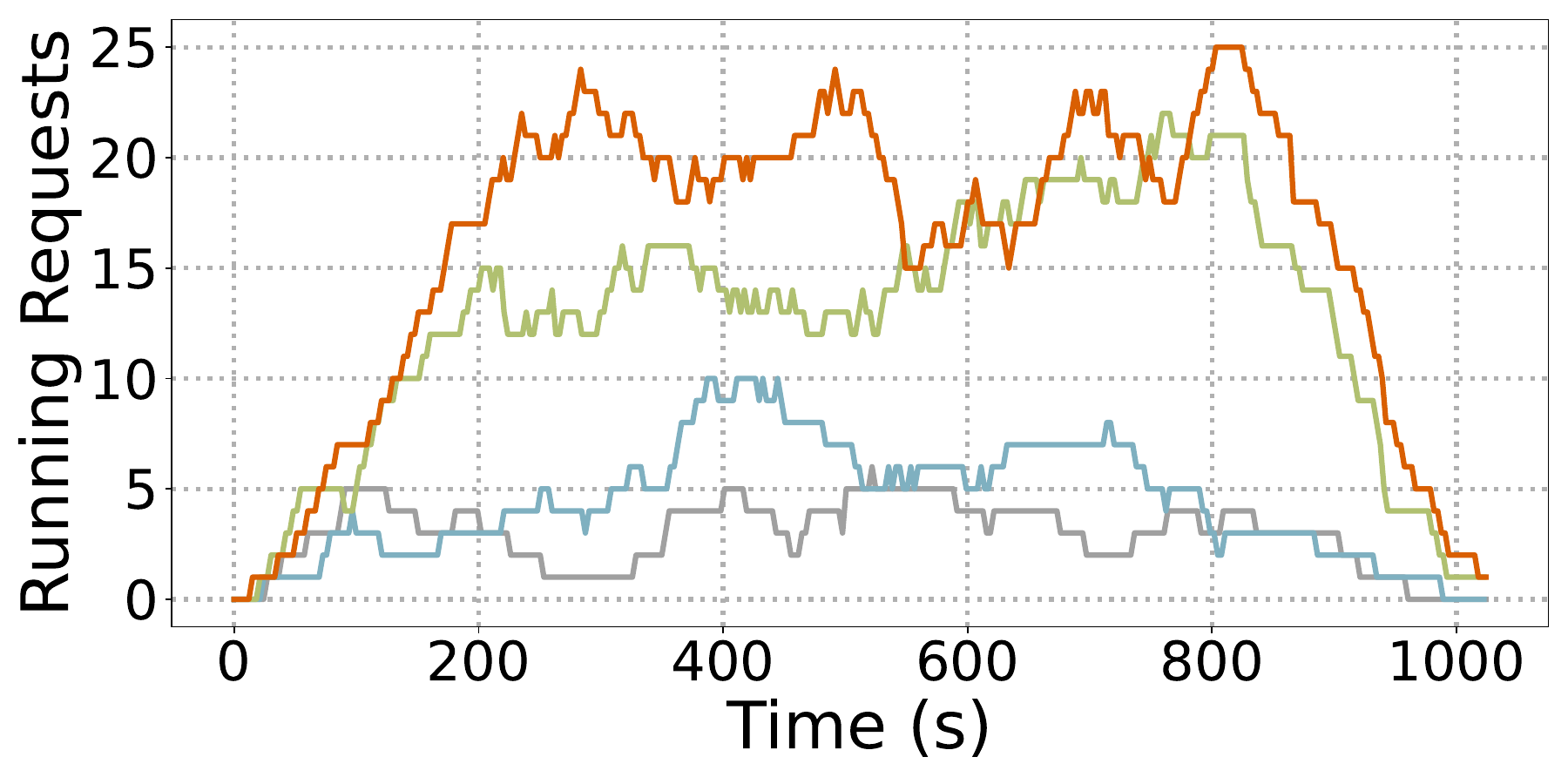}
        \caption{Stake amounts.}
        \label{fig:policy-stake}
    \end{subfigure}
    \hfill
    \begin{subfigure}{0.33\linewidth}
        \centering
        \includegraphics[width=\linewidth]{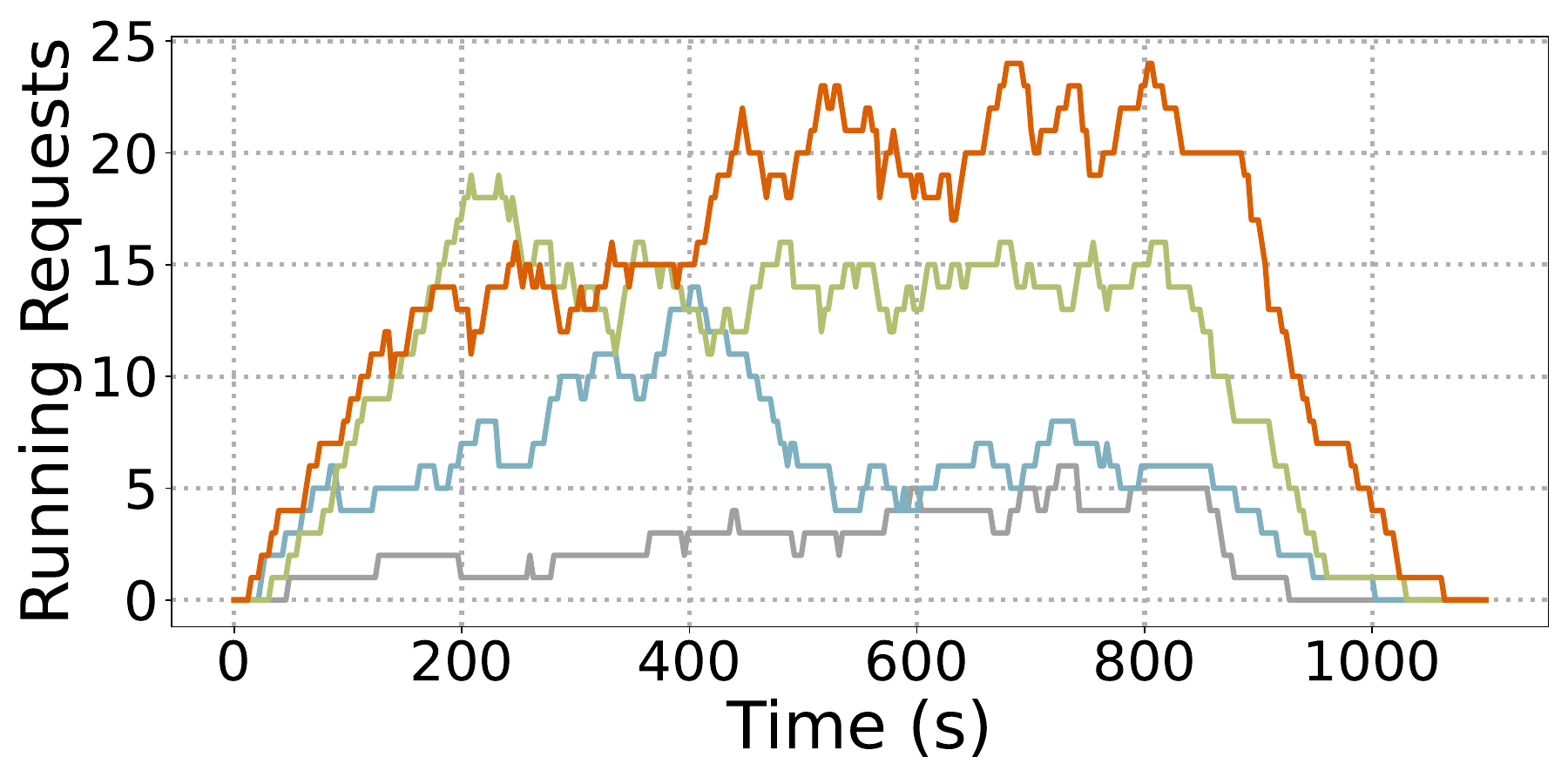}
        \caption{Acceptance frequencies.}
        \label{fig:policy-accept}
    \end{subfigure}
    \hfill
    \begin{subfigure}{0.32\linewidth}
        \centering
        \raisebox{0.2em}{
            \includegraphics[width=\linewidth]{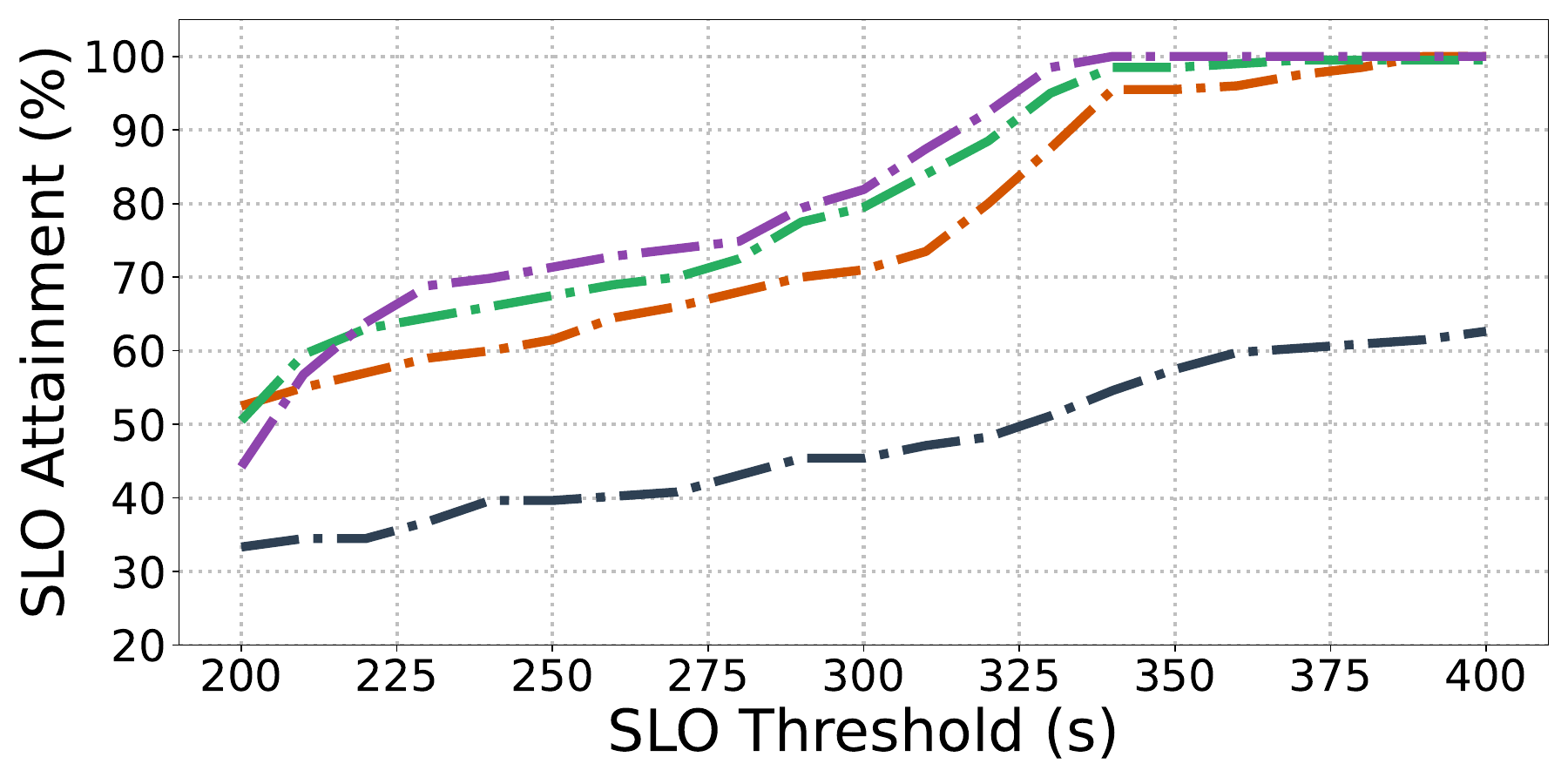}
        }
        \caption{Offloading frequencies.}
        \label{fig:policy-offload}
    \end{subfigure}
    \caption{Number of running requests under different stake amounts (left, $1, 2, 3, 4$ for Node 1-4) and different acceptance frequencies (middle, $0.25, 0.5, 0.75, 1.0$ for Node 1-4); SLO attainment under different offloading frequencies (right, $0.25, 0.5, 0.75, 1.0$ for Node 1-4).}
\end{figure}

We conduct an ablation to examine how user-level parameters (stake amount, request acceptance, and offloading frequency) affect workload allocation and global SLO attainment.

We first varied stake amounts and acceptance frequencies across nodes and monitored their local request queues. Requests were uniformly issued by a dedicated requester-only node. Figure~\ref{fig:policy-stake} and \ref{fig:policy-accept} show that nodes with higher stake or higher acceptance frequency handle a larger share of delegated requests. This demonstrates that the PoS-based scheduling faithfully reflects user-level policies, allowing nodes to actively control their participation. Next, we evaluated the effect of offloading frequency under sustained high request pressure. As illustrated in Figure~\ref{fig:policy-offload}, increasing offloading improves SLO attainment by redistributing workloads from overloaded nodes. However, the benefit saturates at moderate offloading rates: the improvement between rates of $50\%$, $75\%$, and $100\%$ is marginal. Excessive offloading can even hinder long-term credit accumulation as nodes spend more credits to delegate requests. Overall, these results confirm that WWW.Serve’s flexible policy framework allows service providers to regulate their participation and optimize both efficiency and credit dynamics, indicating substantial room for fine-tuning policies to better balance immediate performance and long-term incentives.

\section{Conclusion}
\label{sec:conclusion}

This paper presents WWW.Serve, a fully decentralized framework for collaborative LLM serving. Operating as an open, competitive market of global LLM services, our framework enables service providers to flexibly determine their participation policies and resource commitments, supports autonomous request routing and workload balancing, and incentivizes the delivery of high-quality LLM services. Experimental results show that WWW.Serve achieves scheduling efficiency comparable to centralized baselines while adapting effectively to dynamic resource availability, highlighting its potential as a scalable and privacy-preserving foundation for next-generation LLM services.

\section*{Acknowledgements}

We gratefully acknowledge access to NVIDIA computing resources. This work was partially supported by Google Research Award, Google ML \& System Junior Faculty Award, Amazon Research Award, Fireworks AI, Intel, Li Auto, Moffett AI, and CMU CyLab Seed funding. This research is also based upon work supported by the National Science Foundation under Grant Nos. CCF-2504353 and CCF-2247014, and by IARPA. Any opinions, findings, conclusions or recommendations expressed are those of the authors and do not necessarily reflect the views of the National Science Foundation.

\clearpage

\bibliographystyle{assets/plainnat}
\bibliography{paper}

@misc{openai,
  title        = {ChatGPT},
  author       = {OpenAI},
  howpublished = {\url{https://openai.com/chatgpt}},
  year         = {2022},
}

@misc{anthropic,
  title        = {Claude},
  author       = {Anthropic},
  howpublished = {\url{https://www.anthropic.com/claude}},
  year         = {2023},
}

@misc{azure,
  title        = {Azure OpenAI Service},
  author       = {Microsoft},
  howpublished = {\url{https://azure.microsoft.com/en-us/products/cognitive-services/openai-service/}},
  year         = {2023},
}

@misc{sglang,
    title={SGLang: Efficient Execution of Structured Language Model Programs}, 
    author={Lianmin Zheng and Liangsheng Yin and Zhiqiang Xie and Chuyue Sun and Jeff Huang and Cody Hao Yu and Shiyi Cao and Christos Kozyrakis and Ion Stoica and Joseph E. Gonzalez and Clark Barrett and Ying Sheng},
    year={2024},
    eprint={2312.07104},
    archivePrefix={arXiv},
    primaryClass={cs.AI},
    url={https://arxiv.org/abs/2312.07104}, 
}

@inproceedings{vllm,
    title={Efficient Memory Management for Large Language Model Serving with PagedAttention},
    author={Woosuk Kwon and Zhuohan Li and Siyuan Zhuang and Ying Sheng and Lianmin Zheng and Cody Hao Yu and Joseph E. Gonzalez and Hao Zhang and Ion Stoica},
    booktitle={Proceedings of the ACM SIGOPS 29th Symposium on Operating Systems Principles},
    year={2023}
}

@misc{helix,
    title={Helix: Serving Large Language Models over Heterogeneous GPUs and Network via Max-Flow}, 
    author={Yixuan Mei and Yonghao Zhuang and Xupeng Miao and Juncheng Yang and Zhihao Jia and Rashmi Vinayak},
    year={2025},
    eprint={2406.01566},
    archivePrefix={arXiv},
    primaryClass={cs.DC},
    url={https://arxiv.org/abs/2406.01566}, 
}

@misc{hexgen,
    title={HexGen: Generative Inference of Large Language Model over Heterogeneous Environment}, 
    author={Youhe Jiang and Ran Yan and Xiaozhe Yao and Yang Zhou and Beidi Chen and Binhang Yuan},
    year={2024},
    eprint={2311.11514},
    archivePrefix={arXiv},
    primaryClass={cs.DC},
    url={https://arxiv.org/abs/2311.11514}, 
}

@misc{chatbot-arena,
      title={Chatbot Arena: An Open Platform for Evaluating LLMs by Human Preference}, 
      author={Wei-Lin Chiang and Lianmin Zheng and Ying Sheng and Anastasios Nikolas Angelopoulos and Tianle Li and Dacheng Li and Hao Zhang and Banghua Zhu and Michael Jordan and Joseph E. Gonzalez and Ion Stoica},
      year={2024},
      eprint={2403.04132},
      archivePrefix={arXiv},
      primaryClass={cs.AI},
      url={https://arxiv.org/abs/2403.04132}, 
}

@misc{mtbench,
      title={Judging LLM-as-a-Judge with MT-Bench and Chatbot Arena}, 
      author={Lianmin Zheng and Wei-Lin Chiang and Ying Sheng and Siyuan Zhuang and Zhanghao Wu and Yonghao Zhuang and Zi Lin and Zhuohan Li and Dacheng Li and Eric P. Xing and Hao Zhang and Joseph E. Gonzalez and Ion Stoica},
      year={2023},
      eprint={2306.05685},
      archivePrefix={arXiv},
      primaryClass={cs.CL},
      url={https://arxiv.org/abs/2306.05685}, 
}

@misc{petals,
      title={Petals: Collaborative Inference and Fine-tuning of Large Models}, 
      author={Alexander Borzunov and Dmitry Baranchuk and Tim Dettmers and Max Ryabinin and Younes Belkada and Artem Chumachenko and Pavel Samygin and Colin Raffel},
      year={2023},
      eprint={2209.01188},
      archivePrefix={arXiv},
      primaryClass={cs.LG},
      url={https://arxiv.org/abs/2209.01188}, 
}

@misc{boinc,
      title={BOINC: A Platform for Volunteer Computing}, 
      author={David P. Anderson},
      year={2019},
      eprint={1903.01699},
      archivePrefix={arXiv},
      primaryClass={cs.DC},
      url={https://arxiv.org/abs/1903.01699}, 
}

@misc{ethereum,
      title={Unveiling Decentralization: A Comprehensive Review of Technologies, Comparison, Challenges in Bitcoin, Ethereum, and Solana Blockchain}, 
      author={Han Song and Yihao Wei and Zhongche Qu and Weihan Wang},
      year={2024},
      eprint={2404.04841},
      archivePrefix={arXiv},
      primaryClass={cs.CR},
      url={https://arxiv.org/abs/2404.04841}, 
}

@article{foldingathome,
      title={Folding@home: Achievements from over 20 years of distributed computing},
      author={Shirts, Michael R. and Pande, Vijay S.},
      journal={Current Opinion in Structural Biology},
      volume={80},
      pages={102569},
      year={2023},
      publisher={Elsevier},
      doi={10.1016/j.sbi.2023.102569},
      url={https://www.sciencedirect.com/science/article/pii/S0959440X23000745}
}

@misc{filecoin,
    title={Filecoin: A Decentralized Storage Network},
    author={Protocol Labs},
    howpublished={\url{https://filecoin.io/filecoin.pdf}},
    year={2017},
}

@misc{golem,
      title={Golem: Decentralized Supercomputing for Distributed Applications},
      author={Golem Network},
      howpublished={\url{https://assets.website-files.com/60005e3965a10f31d245af87/60352707e6dd742743c75764_Golemwhitepaper.pdf}},
      year={2020},
}

@misc{distserve,
      title={DistServe: Disaggregating Prefill and Decoding for Goodput-optimized Large Language Model Serving}, 
      author={Yinmin Zhong and Shengyu Liu and Junda Chen and Jianbo Hu and Yibo Zhu and Xuanzhe Liu and Xin Jin and Hao Zhang},
      year={2024},
      eprint={2401.09670},
      archivePrefix={arXiv},
      primaryClass={cs.DC},
      url={https://arxiv.org/abs/2401.09670}, 
}

@misc{gentorrent,
      title={GenTorrent: Scaling Large Language Model Serving with An Overlay Network}, 
      author={Fei Fang and Yifan Hua and Shengze Wang and Ruilin Zhou and Yi Liu and Chen Qian and Xiaoxue Zhang},
      year={2025},
      eprint={2504.20101},
      archivePrefix={arXiv},
      primaryClass={cs.DC},
      url={https://arxiv.org/abs/2504.20101}, 
}

@misc{deserve,
      title={DeServe: Towards Affordable Offline LLM Inference via Decentralization}, 
      author={Linyu Wu and Xiaoyuan Liu and Tianneng Shi and Zhe Ye and Dawn Song},
      year={2025},
      eprint={2501.14784},
      archivePrefix={arXiv},
      primaryClass={cs.DC},
      url={https://arxiv.org/abs/2501.14784}, 
}

@article{qwen3,
    title={Qwen3 Technical Report}, 
    author={An Yang and Anfeng Li and Baosong Yang and Beichen Zhang and Binyuan Hui and others},
    journal = {arXiv preprint arXiv:2505.09388},
    year={2025}
}

@misc{deepseekr1,
      title={DeepSeek-R1: Incentivizing Reasoning Capability in LLMs via Reinforcement Learning}, 
      author={DeepSeek-AI},
      year={2025},
      eprint={2501.12948},
      archivePrefix={arXiv},
      primaryClass={cs.CL},
      url={https://arxiv.org/abs/2501.12948}, 
}

@misc{llmchain,
      title={LLMChain: Blockchain-based Reputation System for Sharing and Evaluating Large Language Models}, 
      author={Mouhamed Amine Bouchiha and Quentin Telnoff and Souhail Bakkali and Ronan Champagnat and Mourad Rabah and Mickaël Coustaty and Yacine Ghamri-Doudane},
      year={2024},
      eprint={2404.13236},
      archivePrefix={arXiv},
      primaryClass={cs.DC},
      url={https://arxiv.org/abs/2404.13236}, 
}

@misc{bitcoin,
    author={Satoshi Nakamoto},
    title={Bitcoin: A Peer-to-Peer Electronic Cash System},
    year={2008},
    howpublished={\url{https://bitcoin.org/bitcoin.pdf}}
}

@inproceedings{ouroboros,
  author    = {Aggelos Kiayias and Alexander Russell and Bernardo David and Roman Oliynykov},
  title     = {Ouroboros: A Provably Secure Proof-of-Stake Blockchain Protocol},
  booktitle = {Advances in Cryptology – CRYPTO 2017},
  year      = {2017},
  pages     = {357--388},
  publisher = {Springer},
  doi       = {10.1007/978-3-319-63688-7_12}
}

@misc{casper,
      title={Casper the Friendly Finality Gadget}, 
      author={Vitalik Buterin and Virgil Griffith},
      year={2019},
      eprint={1710.09437},
      archivePrefix={arXiv},
      primaryClass={cs.CR},
      url={https://arxiv.org/abs/1710.09437}, 
}

@book{grid,
      editor    = {Ian Foster and Carl Kesselman},
      title     = {The Grid 2: Blueprint for a New Computing Infrastructure},
      publisher = {Morgan Kaufmann},
      year      = {2003},
      isbn      = {978-1558609334}
}

@misc{PoQ,
      title={Proof of Quality: A Costless Paradigm for Trustless Generative AI Model Inference on Blockchains}, 
      author={Zhenjie Zhang and Yuyang Rao and Hao Xiao and Xiaokui Xiao and Yin Yang},
      year={2024},
      eprint={2405.17934},
      archivePrefix={arXiv},
      primaryClass={cs.AI},
      url={https://arxiv.org/abs/2405.17934}, 
}

@misc{pariksha,
      title={PARIKSHA: A Large-Scale Investigation of Human-LLM Evaluator Agreement on Multilingual and Multi-Cultural Data}, 
      author={Ishaan Watts and Varun Gumma and Aditya Yadavalli and Vivek Seshadri and Manohar Swaminathan and Sunayana Sitaram},
      year={2024},
      eprint={2406.15053},
      archivePrefix={arXiv},
      primaryClass={cs.CL},
      url={https://arxiv.org/abs/2406.15053}, 
}

@misc{llama3,
      title={The Llama 3 Herd of Models}, 
      author={Touvron, Hugo and Yang, Luyu and Zhai, Shoufa and Pu, Tao and Lu, Zihang and others},
      year={2024},
      eprint={2407.21783},
      archivePrefix={arXiv},
      primaryClass={cs.AI},
      url={https://arxiv.org/abs/2407.21783}, 
}

@article{blockchainreview,
title = {A comprehensive review of blockchain technology: Underlying principles and historical background with future challenges},
journal = {Decision Analytics Journal},
volume = {9},
pages = {100344},
year = {2023},
issn = {2772-6622},
doi = {https://doi.org/10.1016/j.dajour.2023.100344},
url = {https://www.sciencedirect.com/science/article/pii/S2772662223001844},
author = {Gautami Tripathi and Mohd Abdul Ahad and Gabriella Casalino},
}

@misc{blockchainconsensus,
      title={Blockchain Consensus Protocols in the Wild}, 
      author={Christian Cachin and Marko Vukolić},
      year={2017},
      eprint={1707.01873},
      archivePrefix={arXiv},
      primaryClass={cs.DC},
      url={https://arxiv.org/abs/1707.01873}, 
}

@misc{consensusage,
      title={Consensus in the Age of Blockchains}, 
      author={Shehar Bano and Alberto Sonnino and Mustafa Al-Bassam and Sarah Azouvi and Patrick McCorry and Sarah Meiklejohn and George Danezis},
      year={2017},
      eprint={1711.03936},
      archivePrefix={arXiv},
      primaryClass={cs.CR},
      url={https://arxiv.org/abs/1711.03936}, 
}

@inproceedings{specinfer,
    series={ASPLOS ’24},
    title={SpecInfer: Accelerating Large Language Model Serving with Tree-based Speculative Inference and Verification},
    url={http://dx.doi.org/10.1145/3620666.3651335},
    DOI={10.1145/3620666.3651335},
    booktitle={Proceedings of the 29th ACM International Conference on Architectural Support for Programming Languages and Operating Systems, Volume 3},
    publisher={ACM},
    author={Miao, Xupeng and Oliaro, Gabriele and Zhang, Zhihao and Cheng, Xinhao and Wang, Zeyu and Zhang, Zhengxin and Wong, Rae Ying Yee and Zhu, Alan and Yang, Lijie and Shi, Xiaoxiang and Shi, Chunan and Chen, Zhuoming and Arfeen, Daiyaan and Abhyankar, Reyna and Jia, Zhihao},
    year={2024},
    month=apr, pages={932–949},
    collection={ASPLOS ’24}
}

@misc{proxy_ssjf,
      title={Efficient Interactive LLM Serving with Proxy Model-based Sequence Length Prediction}, 
      author={Haoran Qiu and Weichao Mao and Archit Patke and Shengkun Cui and Saurabh Jha and Chen Wang and Hubertus Franke and Zbigniew T. Kalbarczyk and Tamer Başar and Ravishankar K. Iyer},
      year={2024},
      eprint={2404.08509},
      archivePrefix={arXiv},
      primaryClass={cs.DC},
      url={https://arxiv.org/abs/2404.08509}, 
}

@inproceedings{flock,
author = {Chen, Ruonan and Dong, Ye and Liu, Yizhong and Fan, Tingyu and Li, Dawei and Guan, Zhenyu and Liu, Jianwei and Zhou, Jianying},
title = {FLock: Robust and Privacy-Preserving Federated Learning based on Practical Blockchain State Channels},
year = {2025},
isbn = {9798400712746},
publisher = {Association for Computing Machinery},
address = {New York, NY, USA},
url = {https://doi.org/10.1145/3696410.3714666},
doi = {10.1145/3696410.3714666},
booktitle = {Proceedings of the ACM on Web Conference 2025},
pages = {884–895},
numpages = {12},
location = {Sydney NSW, Australia},
series = {WWW '25}
}

@misc{fedshield,
      title={FedShield-LLM: A Secure and Scalable Federated Fine-Tuned Large Language Model}, 
      author={Md Jueal Mia and M. Hadi Amini},
      year={2025},
      eprint={2506.05640},
      archivePrefix={arXiv},
      primaryClass={cs.CR},
      url={https://arxiv.org/abs/2506.05640}, 
}

@misc{llm_blockchain,
      title={Connecting Large Language Models with Blockchain: Advancing the Evolution of Smart Contracts from Automation to Intelligence}, 
      author={Youquan Xian and Xueying Zeng and Duancheng Xuan and Danping Yang and Chunpei Li and Peng Fan and Peng Liu},
      year={2024},
      eprint={2412.02263},
      archivePrefix={arXiv},
      primaryClass={cs.DC},
      url={https://arxiv.org/abs/2412.02263}, 
}

@INPROCEEDINGS{linguachain,
  author={Kozgunov, Nikita V. and Khalashi, Mohammad Hossein and Oliseenko, Valerij D. and Tulupyeva, Tat’Jana V.},
  booktitle={2024 XXVII International Conference on Soft Computing and Measurements (SCM)}, 
  title={LinguaChain: a Peer-to-peer Dynamic Decentralized Large Language Model with Coin-based Incentives}, 
  year={2024},
  volume={},
  number={},
  pages={178-181},
  doi={10.1109/SCM62608.2024.10554241}
}

@misc{speculative1,
      title={Accelerating Large Language Model Decoding with Speculative Sampling}, 
      author={Charlie Chen and Sebastian Borgeaud and Geoffrey Irving and Jean-Baptiste Lespiau and Laurent Sifre and John Jumper},
      year={2023},
      eprint={2302.01318},
      archivePrefix={arXiv},
      primaryClass={cs.CL},
      url={https://arxiv.org/abs/2302.01318}, 
}

@misc{speculative2,
      title={Fast Inference from Transformers via Speculative Decoding}, 
      author={Yaniv Leviathan and Matan Kalman and Yossi Matias},
      year={2023},
      eprint={2211.17192},
      archivePrefix={arXiv},
      primaryClass={cs.LG},
      url={https://arxiv.org/abs/2211.17192}, 
}

@article{seti,
    author = {Anderson, David P. and Cobb, Jeff and Korpela, Eric and Lebofsky, Matt and Werthimer, Dan},
    title = {SETI@home: an experiment in public-resource computing},
    year = {2002},
    issue_date = {November 2002},
    publisher = {Association for Computing Machinery},
    address = {New York, NY, USA},
    volume = {45},
    number = {11},
    issn = {0001-0782},
    url = {https://doi.org/10.1145/581571.581573},
    doi = {10.1145/581571.581573},
    journal = {Commun. ACM},
    month = nov,
    pages = {56–61},
    numpages = {6}
}

@misc{survey_dec_training,
      title={Beyond A Single AI Cluster: A Survey of Decentralized LLM Training}, 
      author={Haotian Dong and Jingyan Jiang and Rongwei Lu and Jiajun Luo and Jiajun Song and Bowen Li and Ying Shen and Zhi Wang},
      year={2025},
      eprint={2503.11023},
      archivePrefix={arXiv},
      primaryClass={cs.DC},
      url={https://arxiv.org/abs/2503.11023}, 
}

@misc{survey_dec_learning,
      title={A survey on secure decentralized optimization and learning}, 
      author={Changxin Liu and Nicola Bastianello and Wei Huo and Yang Shi and Karl H. Johansson},
      year={2024},
      eprint={2408.08628},
      archivePrefix={arXiv},
      primaryClass={cs.LG},
      url={https://arxiv.org/abs/2408.08628}, 
}

@article{think_p2p,
  author       = {Anne{-}Marie Kermarrec and
                  Fran{\c{c}}ois Ta{\"{\i}}ani},
  title        = {Want to scale in centralized systems? Think {P2P}},
  journal      = {J. Internet Serv. Appl.},
  volume       = {6},
  number       = {1},
  pages        = {16:1--16:12},
  year         = {2015},
  url          = {https://doi.org/10.1186/s13174-015-0029-1},
  doi          = {10.1186/S13174-015-0029-1},
  bibsource    = {dblp computer science bibliography, https://dblp.org}
}

@misc{dec_privacy_preserving,
      title={Decentralized Privacy-preserving Timed Execution in Blockchain-based Smart Contract Platforms}, 
      author={Chao Li and Balaji Palanisamy},
      year={2019},
      eprint={1902.05613},
      archivePrefix={arXiv},
      primaryClass={cs.CR},
      url={https://arxiv.org/abs/1902.05613}, 
}

@inproceedings{byzantine,
author = {Ma, Kehao and Xu, Minghui and Guo, Yihao and Cui, Lukai and Ni, Shiping and Zhang, Shan and Wang, Weibing and Yang, Haiyong and Cheng, Xiuzhen},
title = {Anonymity on Byzantine-Resilient Decentralized Computing},
year = {2024},
isbn = {978-3-031-71466-5},
publisher = {Springer-Verlag},
address = {Berlin, Heidelberg},
url = {https://doi.org/10.1007/978-3-031-71467-2_32},
doi = {10.1007/978-3-031-71467-2_32},
booktitle = {Wireless Artificial Intelligent Computing Systems and Applications: 18th International Conference, WASA 2024, Qindao, China, June 21–23, 2024, Proceedings, Part II},
pages = {400–412},
numpages = {13},
location = {Qingdao, China}
}

@misc{eccos,
      title={OmniRouter: Budget and Performance Controllable Multi-LLM Routing}, 
      author={Kai Mei and Wujiang Xu and Shuhang Lin and Yongfeng Zhang},
      year={2025},
      eprint={2502.20576},
      archivePrefix={arXiv},
      primaryClass={cs.DB},
      url={https://arxiv.org/abs/2502.20576}, 
}

@misc{openr1math220k,
  author = {Open-R1-Team},
  title = {OpenR1-Math-220k: A Large-Scale Dataset for Mathematical Reasoning},
  year = {2025},
  publisher = {Hugging Face},
  url = {https://huggingface.co/datasets/open-r1/OpenR1-Math-220k},
}

@misc{torchao,
  title={TorchAO: PyTorch-Native Training-to-Serving Model Optimization},
  author={torchao},
  url={https://github.com/pytorch/ao},
  license={BSD-3-Clause},
  month={oct},
  year={2024}
}

@article{flashinfer,
    title = {FlashInfer: Efficient and Customizable Attention Engine for LLM Inference Serving},
    author = {
      Ye, Zihao and
      Chen, Lequn and
      Lai, Ruihang and
      Lin, Wuwei and
      Zhang, Yineng and
      Wang, Stephanie and
      Chen, Tianqi and
      Kasikci, Baris and
      Grover, Vinod and
      Krishnamurthy, Arvind and
      Ceze, Luis
    },
    journal = {arXiv preprint arXiv:2501.01005},
    year = {2025},
    url = {https://arxiv.org/abs/2501.01005}
}

@misc{sdpa,
      title={Attention Is All You Need}, 
      author={Ashish Vaswani and Noam Shazeer and Niki Parmar and Jakob Uszkoreit and Llion Jones and Aidan N. Gomez and Lukasz Kaiser and Illia Polosukhin},
      year={2023},
      eprint={1706.03762},
      archivePrefix={arXiv},
      primaryClass={cs.CL},
      url={https://arxiv.org/abs/1706.03762}, 
}

@misc{triton,
  title        = {Triton},
  author       = {OpenAI},
  url          = {https://github.com/triton-lang/triton},
  year         = {2021},
}

\clearpage

\beginappendix
\section{Supplementary System Details}

In this section, we provide additional illustrations that complement the descriptions in Section~\ref{sec:overview} and Section~\ref{sec:core_design}, focusing on two key components of WWW.Serve: (i) the end-to-end workflow of processing a single user request, and (ii) the gossip-driven protocol for peer synchronization.

\subsection{Request Processing Workflow}

\begin{figure}[t]
    \centering
    \includegraphics[width=\linewidth]{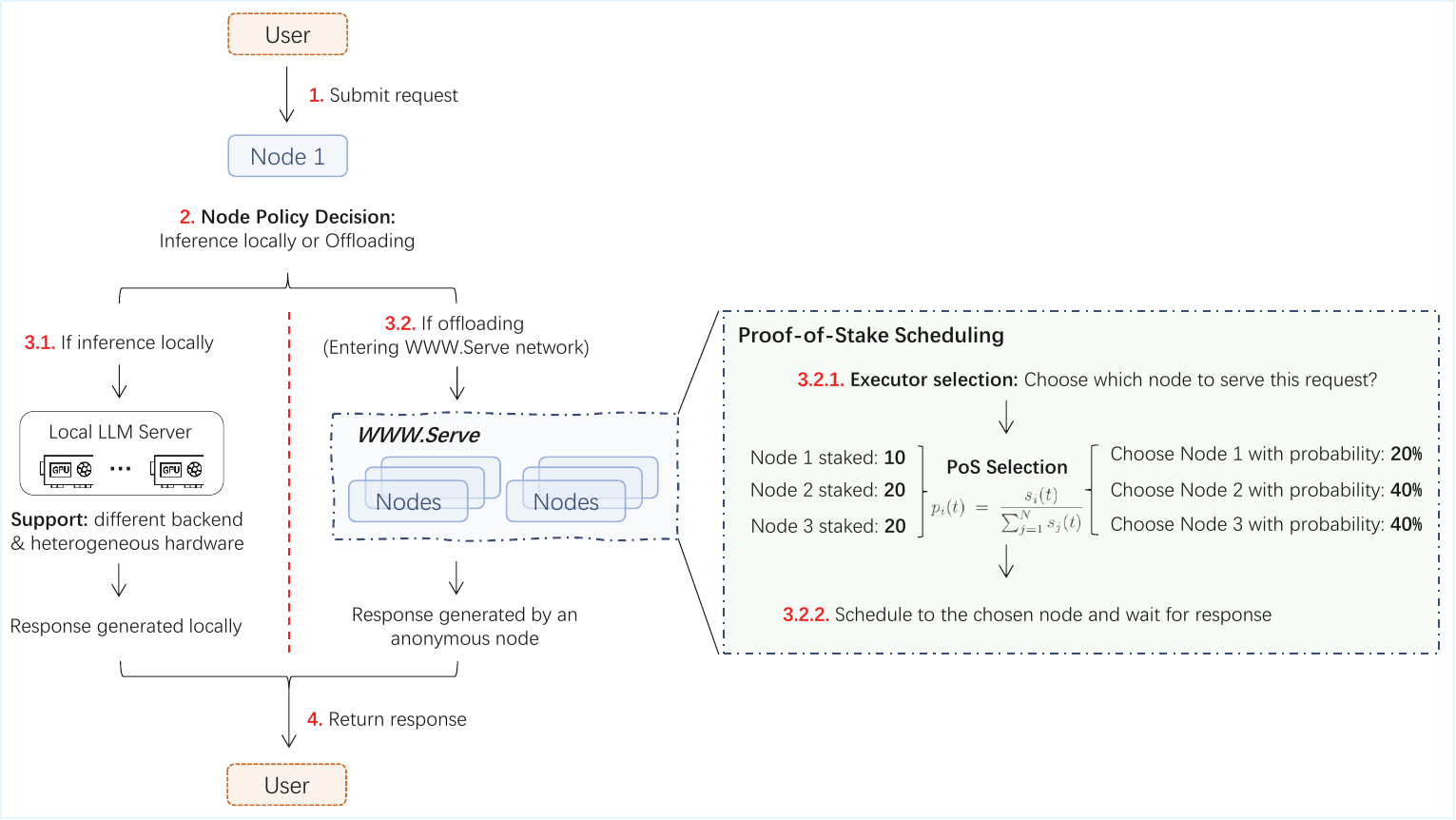}
    \caption{End-to-end workflow of a single user request, including local execution or remote offloading via PoS-based scheduling.}
    \label{fig:request_workflow}
\end{figure}

Figure~\ref{fig:request_workflow} presents the end-to-end workflow of a node handling a user request. Upon receiving a query (Step 1), the node determines whether to execute it locally or offload it to the network (Step 2).

\textbf{Local execution.} (Step 3.1) Nodes may host local language models using diverse runtimes (e.g., vLLM, SGLang) on heterogeneous devices. WWW.Serve abstracts these differences through a unified inference interface, allowing heterogeneous hardware and software stacks to participate without modifications to global collaboration mechanisms.

\textbf{Remote execution.} (Step 3.2) If offloading is selected, the node samples a trustworthy executor through our PoS-based scheduler (Step 3.2.1), where each peer’s sampling probability is proportional to its staked credit. Once an executor accepts the task, the request is forwarded for processing and the generated response is returned to the origin node.

\subsection{Gossip-Driven Peer Synchronization}

\begin{figure}[t]
    \centering
    \includegraphics[width=\linewidth]{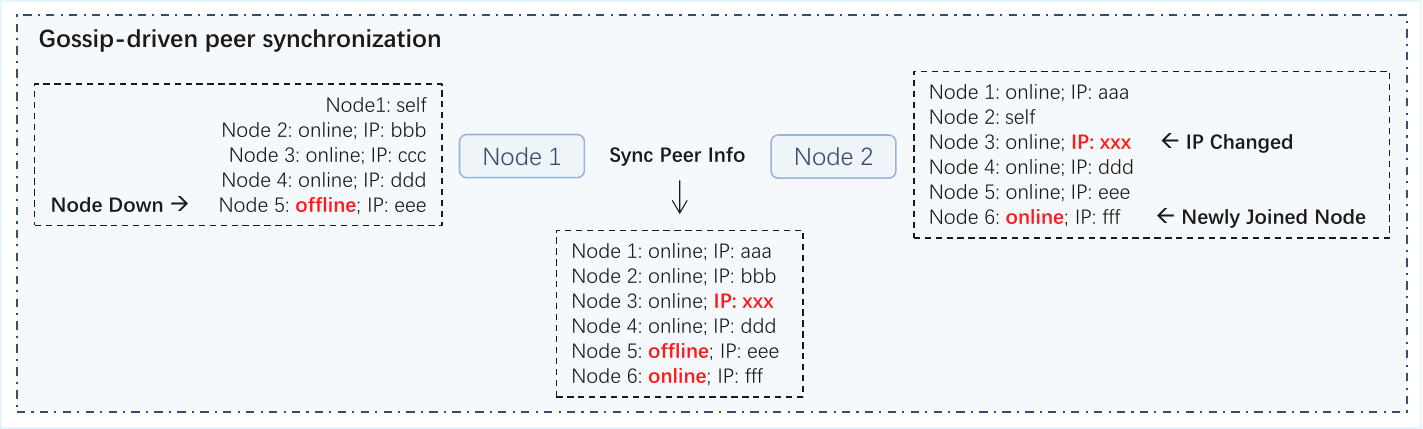}
    \caption{Gossip-driven peer synchronization. During each gossip round, nodes exchange local peer views, allowing updated information to propagate diffusively throughout the network.}
    \label{fig:gossip_detail}
\end{figure}

Figure~\ref{fig:gossip_detail} shows an example gossip synchronization between two nodes. Each node maintains a local view of peer availability, including identifiers, online/offline status, and communication endpoints. During a gossip round, two nodes exchange their current views and reconcile any discrepancies, for instance, peers that have gone offline (Node 5), updated their network addresses (Node 3), or newly joined (Node 6). Repeated lightweight pairwise exchanges allow updates to diffuse across the network and converge quickly, without requiring any central coordinator.

\section{Implementation}
\label{app:implementation}

In this section, we detail the implementation of the core modules contained in WWW.Serve.

\emph{Communication Manager:} is implemented using \texttt{ZeroMQ}, providing low-latency, asynchronous message passing between nodes. We adopt the \texttt{ROUTER} pattern, where each node binds to a fixed port to listen for incoming messages while simultaneously sending requests to peers. This design enables efficient bidirectional communication without relying on a centralized broker.

\emph{Request Manager:} leverages an asynchronous queue (\texttt{AsyncQueue}) for local request buffering and scheduling. Incoming requests are timestamped and inserted into the queue, while outgoing requests are dynamically dispatched to eligible executors based on the Proof-of-Stake–based selection mechanism and user-specific rules.

\emph{Model Manager:} supports a variety of LLM serving backends via \texttt{AsyncOpenAI} clients. Service providers only need to supply a base URL and API key, without exposing internal model details. Each node periodically collects metrics from its backend servers, including the number of active and queued requests and memory utilization, to support efficient request dispatching and balanced workload distribution.

\emph{Experiment Configuration:} is specified in a dedicated YAML file, capturing all necessary parameters for a node to initialize WWW.Serve modules. Each file includes: (i) Server Parameters: communication IP, port, user-level policy (e.g., stake, offload frequency, accept frequency), and backend selection (e.g., SGLang, vLLM); and (ii) Models: paths to local or remote LLMs, base URL for API access, and API keys. Each model entry also specifies generation parameters (e.g., maximum tokens, temperature, top-p) and dispatch parameters (e.g., target memory utilization). These YAML files are automatically parsed by each node at startup, ensuring reproducibility and allowing fine-grained control over node behavior.

\section{Experimental Settings}
\label{app:setting}

\begin{table}[t]
    \centering
    \small
    \renewcommand{\arraystretch}{1.2}
    \begin{tabular}{l|lll|cc|cc}
        \toprule
        \multirow{2}{*}{\textbf{Node}} & \multirow{2}{*}{\textbf{Model}} & \multirow{2}{*}{\textbf{GPU}} & \multirow{2}{*}{\textbf{Backend}} & \multicolumn{4}{c}{\textbf{Request Schedule}} \\
        & & & & Interval 1 & $1/\lambda_1$ & Interval 2 & $1/\lambda_2$ \\
        \midrule
        \rowcolor{gray!10}
        \multicolumn{8}{l}{\textit{\qquad Setting 1}} \\
        Node 1 & Qwen3 8B & ADA6000 & SGLang & 0--300s & 5 & 300--750s & 20 \\
        Node 2 & Qwen3 8B & ADA6000 & SGLang & 0--750s & 20 & & \\
        Node 3 & Qwen3 8B & ADA6000 & SGLang & 0--750s & 20 & & \\
        Node 4 & Qwen3 8B & ADA6000 & SGLang & 0--450s & 20 & 450--750s & 5 \\
        \midrule
        \rowcolor{gray!10}
        \multicolumn{8}{l}{\textit{\qquad Setting 2}} \\
        Node 1 & Qwen3 8B & ADA6000 & SGLang & 0--300s & 4 & 300--750s & 20 \\
        Node 2 & Qwen3 8B & ADA6000 & SGLang & 0--750s & 20 & & \\
        Node 3 & Qwen3 4B & RTX3090 & SGLang & 0--750s & 30 & & \\
        Node 4 & Qwen3 4B & RTX3090 & SGLang & 0--450s & 30 & 450--750s & 6 \\
        \midrule
        \rowcolor{gray!10}
        \multicolumn{8}{l}{\textit{\qquad Setting 3}} \\
        Node 1 & Qwen3 32B & 4$\times$A100 & SGLang & 0--300s & 2 & 300--750s & 6 \\
        Node 2 & Qwen3 8B & L40S & SGLang & 0--750s & 15 & & \\
        Node 3 & DeepSeek-Qwen 7B & RTX3090 & vLLM & 0--750s & 30 & & \\
        Node 4 & Llama3.1 8B & ADA6000 & vLLM & 0--450s & 15 & 450--750s & 5 \\
        \midrule
        \rowcolor{gray!10}
        \multicolumn{8}{l}{\textit{\qquad Setting 4}} \\
        Node 1 & Llama3.1 8B & L40S & vLLM & 0--750s & 9 & & \\
        Node 2 & Llama3.1 8B & L40S & vLLM & 0--450s & 6 & 450--750s & 12 \\
        Node 3 & DeepSeek-Qwen 7B & ADA6000 & vLLM & 0--300s & 6 & 300--750s & 12 \\
        Node 4 & DeepSeek-Qwen 7B & ADA6000 & vLLM & 0--450s & 12 & 450--750s & 6 \\
        Node 5 & Qwen3 4B & RTX4090 & SGLang & 0--750s & 12 & & \\
        Node 6 & Qwen3 4B & RTX4090 & SGLang & 0--450s & 10 & 450--750s & 20 \\
        Node 7 & Qwen3 4B & RTX3090 & SGLang & 0--300s & 20 & 300--750s & 10 \\
        Node 8 & Qwen3 4B & RTX3090 & SGLang & 0--300s & 20 & 300--750s & 10 \\
        \bottomrule
    \end{tabular}
    \caption{Experimental configurations correspond to Figure~\ref{fig:basic} (left to right) and Table~\ref{tab:latency}. Each setting specifies the deployed model, GPU type, serving backend, and the time-varying request schedule for all nodes. The Interval columns specify the time ranges, and the corresponding $1/\lambda$ columns denote the expected inter-arrival time (in seconds) used for Poisson request generation, i.e., request inter-arrival times distributed as $Poi(\lambda)$.}
    \label{tab:settings}
\end{table}

To comprehensively evaluate the scheduling efficiency of WWW.Serve in heterogeneous, dynamic environments, we designed four distinct experimental settings, summarized in Table~\ref{tab:settings}. Each setting varies in the deployed language models, GPU types, and serving backends, covering a broad spectrum of realistic node capabilities. Our evaluation primarily relies on recent open-source reasoning LLMs, including the Qwen3 series~\citep{qwen3}, DeepSeek-Qwen~\citep{deepseekr1}, and LLaMA~3.1~\citep{llama3}, and prompts are drawn from the OpenR1-Math-220k dataset~\citep{openr1math220k}. Time-varying request patterns are simulated via piecewise Poisson arrival rates for each node, capturing both high- and low-load periods that differ across nodes. Due to the limited scale of our experiments, we employ a shared ledger instead of a full Credit Block Chain, simplifying implementation while preserving the essential dynamics of credit transactions.

All nodes are configured with consistent policy parameters, including offload frequency (80\%), acceptance frequency (80\%), target utilization (70\%), and generation parameters such as maximum token length (8192), temperature (0), and top-p sampling (0.95). These standardized settings ensure comparability and reproducibility across heterogeneous nodes while enabling a systematic evaluation of the effects of resource diversity and dynamic workloads on scheduling efficiency, latency, and SLO attainment.

\end{document}